\newcommand{\bfd}{\mathbf{d}}
\newcommand{\oldcomment}[1]{}
\newcommand\tuple[1]{\langle #1 \rangle}
\newcommand\etal{{\it et al{.}}}
\newcommand\minisatp{\textsc{MiniSat$^+$}}
\newcommand{\Base}{\mathit{Base}}
\newcommand{\intMultiSet}{\mathit{ms(\mathbb{N})}}
\newcommand{\sumCarry}{\mathit{sum\_carry}}
\newcommand{\sumDigits}{\mathit{sum\_digits}}
\newcommand{\numComparators}{\mathit{num\_comp}}
\newcommand{\cost}{\mathit{cost}}
\newenvironment{SProg}
     {\begin{small}\begin{tt}\begin{tabular}[c]{l}}%
     {\end{tabular}\end{tt}\end{small}}
\newenvironment{SProg2}
     {\begin{small}\begin{tt}\begin{tabular}[c]{ll}}%
     {\end{tabular}\end{tt}\end{small}}
\newcommand{\qin}{\hspace*{0.15in}}
\newcommand{\set}[1]{\left\{
      \begin{array}{l}#1\end{array}
        \right\}}
\newcommand{\sset}[2]{\left\{~#1  \left|
      \begin{array}{l}#2\end{array}
    \right.     \right\}}
\newcommand{\card}[1]{ 
    \left|
      \begin{array}{l}#1\end{array}
    \right|}
\def\anno#1{{\ooalign{\hfil\raise.07ex\hbox{\small{\rm #1}}\hfil%
        \crcr\mathhexbox20D}}}
\newcommand{\ceil}[1]{\lceil{#1}\rceil}
\title{Optimal Base Encodings for\\
       Pseudo-Boolean Constraints\thanks{Supported by GIF\ grant
       966-116.6 and the Danish Natural Science Research Council.}
       }
\author{    Michael Codish\inst{1}
            \and
            Yoav Fekete\inst{1}
            \and
            Carsten Fuhs\inst{2}
            \and
            Peter Schneider-Kamp\inst{3}
}%
\institute{
	Department of Computer Science, 
        Ben Gurion University of the Negev, Israel
	\and 	  
	LuFG Informatik 2, RWTH Aachen University, Germany 
	\and 	  
        IMADA, University of Southern Denmark, Denmark
}%
\begin{document}
\maketitle

\begin{abstract}
  This paper formalizes the \emph{optimal base problem}, presents an
  algorithm to solve it, and describes its application to the encoding
  of Pseudo-Boolean constraints to SAT.  We demonstrate the impact of
  integrating our algorithm within the Pseudo-Boolean constraint
  solver \minisatp.  Experimentation indicates that our algorithm
  scales to bases involving numbers up to 1,000,000, improving on the
  restriction in \minisatp\ to prime numbers up to 17.  We show that,
  while for many examples primes up to 17 do suffice, encoding with
  respect to optimal %
  bases 
  reduces the CNF sizes and improves the subsequent SAT solving
  time for many examples.
\end{abstract}

\section{Introduction}

The optimal base problem is all about finding an efficient
representation for a given collection of positive integers. One
measure for the efficiency of such a representation is the sum of the
digits of the numbers.  Consider for example the decimal numbers
$S=\{16,30,54,60\}$. The sum of their digits is 25.  Taking binary
representation we have $S_{(2)}=\{10000,11110,110110,$ $111100\}$ and
the sum of digits is 13, which is smaller. Taking ternary
representation gives $S_{(3)}=\{121,1010,2000,2020\}$ with an even
smaller sum of digits, 12. Considering the \emph{mixed radix} base
$B=\tuple{3,5,2,2}$, the numbers are represented as $S_{(B)}=\{101,$
$1000,1130,10000\}$ and the sum of the digits is 9.  The optimal base
problem is to find a (possibly mixed radix) base for a given sequence
of numbers to minimize the size of the representation of the
numbers. When measuring size as ``sum of digits'', the base $B$ is
indeed optimal for the numbers of $S$. In this paper we present the
optimal base problem and illustrate why it is relevant to the encoding
of Pseudo-Boolean constraints to SAT.
We also present an algorithm and show that our implementation is
superior to current implementations.

Pseudo-Boolean constraints take the form $a_1x_1 + a_2x_2 + \cdots +
a_nx_n \geq k$, where $a_1,\ldots, a_n$ are integer coefficients,
$x_1,\ldots,x_n$ are Boolean literals (i.e., Boolean variables or
their negation), and $k$ is an integer.  We assume that constraints
are in Pseudo-Boolean normal form~\cite{Barth95}, that is, the
coefficients $a_i$ and $k$ are always positive and Boolean
variables occur at most once in $a_1x_1 + a_2x_2 + \cdots + a_nx_n$.
Pseudo-Boolean constraints are well studied and arise in many
different contexts, for example in verification \cite{Bryant02} and in
operations research \cite{Bixby92}.  Typically we are interested in
the satisfiability of a conjunction of Pseudo-Boolean constraints.
Since 2005 there is a series of Pseudo-Boolean Evaluations
\cite{Manquinho06} 
which aim to assess the state of the art in the field of
Pseudo-Boolean solvers. We adopt these competition problems as a
benchmark for the techniques proposed in this paper.

Pseudo-Boolean constraint satisfaction problems are often reduced to
SAT. 
Many works describe techniques to encode these constraints to
propositional formulas \cite{BailleuxBR06,BailleuxBR09,EenS06}.
The Pseudo-Boolean solver \minisatp\ (\cite{EenS06}, cf.\
\url{http://minisat.se}) chooses between three techniques to
generate SAT encodings for Pseudo-Boolean constraints. These convert
the constraint to: (a) a BDD structure, (b) a network of sorters, and
(c) a network of (binary) adders. The network of adders is the most
concise encoding, but it has the weakest propagation properties and
often leads to higher SAT solving times
than the BDD based encoding, which, on the other hand,
generates the largest encoding. The encoding based on sorting networks
is often the one applied and is the one we consider in this paper.

\begin{wrapfigure}{r}{42mm}\vspace{-10mm}
  \begin{center}\scriptsize  
     $\begin{array}{ll|l|ll}
      \cline{3-3}
      x_5    & ~ &\hspace{8mm} & ~& y_8\\
        \cline{2-2} \cline{4-4}
      x_5 & ~ & & & y_7\\
        \cline{2-2} \cline{4-4}
      x_5 & ~ & & & y_6\\
        \cline{2-2} \cline{4-4}
      x_4 & ~ & & & y_5\\
        \cline{2-2} \cline{4-4}
      x_4 & ~ & & & y_4=1\\
        \cline{2-2} \cline{4-4}
      x_3 & ~ & & & y_3=1\\
        \cline{2-2} \cline{4-4}
      x_2 & ~ & & & y_2=1\\
        \cline{2-2} \cline{4-4}
      x_1 & ~ & & & y_1=1\\
        \cline{2-2} \cline{4-4}
          & & &\\
      \cline{3-3}
    \end{array}$
  \end{center}
  \vspace{-11mm}
\end{wrapfigure}
To demonstrate how sorters can be used to translate Pseudo-Boolean
constraints, consider the constraint $\psi=x_1 + x_2 + x_3 + 2x_4 +
3x_5 \geq 4$ where the sum of the coefficients is 8. On the right, we
illustrate an $8\times 8$ sorter where $x_1,x_2,x_3$ are each fed into
a single input, $x_4$ into two of the inputs, and $x_5$ into three of
the inputs. The outputs are $y_1,\ldots,y_8$.  First, we represent the
sorting network as a Boolean formula, $\varphi$, which in general, for
$n$ inputs, will be of size $O(n\log^2 n)$ \cite{Batcher68}. Then, to
assert $\psi$ we take the conjunction of $\varphi$ with the formula
$y_1\land y_2\land y_3\land y_4$.

But what happens if the coefficients in a constraint are larger than
in this example?  How should we encode 
$16 x_1 + 30 x_2 + 54 x_3 + 30 x_4 + 60 x_5 \geq 87$?  How should we
handle very large coefficients (larger than 1,000,000)? To this end,
the authors in~\cite{EenS06} generalize the above idea and propose 
to decompose the constraint into a number of interconnected
sorting networks. Each sorter represents a digit in a mixed radix
base.
This construction is governed by the choice of a suitable mixed radix
base and the objective is to find a base which minimizes the size of
the sorting networks. 
Here the optimal base problem comes in, as the size of the
networks is directly related to the size of the representation of the
coefficients. We consider the sum of the digits (of coefficients)
and other measures for the size of the representations
and study their influence on the quality of the encoding.

In \minisatp\ the search for an optimal base is performed using a
brute force algorithm and the resulting base is constructed from prime
numbers up to 17. The starting point for this paper is the following
remark	 from \cite{EenS06} (Footnote 8):
\vspace{-1mm}
\begin{quotation}\noindent
  \emph{This is an ad-hoc solution that should be improved in the
  future. Finding the optimal base is a challenging optimization
  problem in its own right.}
\end{quotation}
In this paper we take the challenge and present an algorithm which
scales to find an optimal base consisting of elements 
with values up to 1,000,000. We illustrate that in many
cases finding a better base leads also to better SAT solving time.

\pagebreak

Section~\ref{section:obp} provides preliminary definitions and
formalizes the optimal base problem.
Section~\ref{sec:encoding} describes how \minisatp decomposes a
Pseudo-Boolean constraint with respect to a given mixed radix base to
generate a corresponding propositional encoding, so that the constraint has
a solution precisely when the encoding has a model.
Section~\ref{section:4} is about (three) alternative measures with
respect to which an optimal base can be found.
Sections~\ref{sec:ob1}--\ref{sec:ob3} introduce our algorithm
based on classic AI search methods (such as cost underapproximation)
in
three steps: Heuristic pruning, best-first branch and bound, and base
abstraction.  
Sections~\ref{sec:exp} and \ref{relwork} present an experimental
evaluation and some related work.  
Section~\ref{sec:conc} concludes. 
Proofs are given in the appendix.

\section{Optimal Base Problems}
\label{section:obp}

In the classic base $r$ radix system, positive integers are
represented as finite sequences of digits
$\bfd=\tuple{d_0,\ldots,d_k}$ where for each digit $0\leq d_i<r$, and
for the most significant digit, $d_k>0$.  The integer value associated
with $\bfd$ is $v=d_0 + d_1r + d_2r^2+\cdots+d_kr^k$.
A mixed radix system is a generalization where a base is an infinite
radix sequence $B=\tuple{r_0,r_1,r_2,\ldots}$ of integers where for
each radix, $r_i>1$ and for each digit, %
$0\leq d_i < r_i$.
The integer value associated with $\bfd$ is 
$v = d_0w_0+ d_1w_1 + d_2w_2+\cdots+d_kw_k$
where $w_0 = 1$ and for $i\ge 0$, $w_{i+1} = w_ir_i$.
The sequence $weights(B)=\tuple{w_0,w_1,w_2,\ldots}$ specifies the
weighted contribution of each  digit position and is called the
\emph{weight sequence of $B$}.  
A finite mixed radix base is a finite sequence
$B=\tuple{r_0,r_1,\ldots,r_{k-1}}$ with the same restrictions as for
the infinite case except that numbers always have $k+1$ digits
(possibly padded with zeroes) and there is no bound on the value of
the most significant digit, $d_k$.

In this paper we focus on the representation of finite multisets of
natural numbers in finite mixed radix bases.  Let $\Base$ denote the
set of finite mixed radix bases and $\intMultiSet$ the set of finite
non-empty multisets of natural numbers. We often view multisets as
ordered (and hence refer to their first element, second element, etc.).
For a finite sequence or multiset $S$ of natural numbers, we denote
its length by $|S|$, its maximal element by $max(S)$, its $i^{th}$
element by $S(i)$, and the multiplication of its elements by $\prod
S$ (if $S$ is the empty sequence then $\prod S=1$).
If a base consists of prime numbers only, then we say that it is a
prime base. The set of prime bases is denoted $\Base_p$.

Let $B\in\Base$ with $|B|=k$. We denote by $v_{(B)}=\tuple{d_0, d_1,
  \ldots, d_k}$ the representation of a natural number $v$ in base
$B$. The most significant digit
 of $v_{(B)}$, denoted $msd(v_{(B)})$, is
$d_k$.  If $msd(v_{(B)})=0$ then we say that $B$ is redundant for $v$.
Let $S\in\intMultiSet$ with $|S|=n$. We denote the $n\times (k+1)$
matrix of digits of elements from $S$ in base $B$ as
$S_{(B)}$. Namely, the $i^{th}$ row in $S_{(B)}$ is the vector
$S(i)_{(B)}$. The most significant digit column of $S_{(B)}$ is the
$k+1$ column of the matrix and denoted $msd(S_{(B)})$.  If
$msd(S_{(B)})=\tuple{0,\ldots,0}^T$, then we say that $B$ is redundant
for $S$.
This is equivalently characterized by $\prod B > max(S)$.

\begin{definition}[non-redundant bases]
\label{def:nrb}
  Let $S\in\intMultiSet$. We denote the set of non-redundant bases for
  $S$, $\Base(S) = \sset{B\in\Base}{\prod B \leq max(S)}$. The set of  
  non-redundant prime bases for $S$ is denoted $\Base_p(S)$. 
  \pagebreak
  The set of non-redundant (prime) bases for $S$, containing elements
  no larger than $\ell$, is denoted $\Base^\ell(S)$
  ($\Base_p^\ell(S)$). 
  The set of bases in $\Base(S)$/$\Base^\ell(S)$/$\Base_p^\ell(S)$, is
  often viewed as a tree with root $\tuple{~}$ (the empty base) and an
  edge from $B$ to $B'$ if and only if $B'$ is obtained from $B$ by
  extending it with a single integer value.
\end{definition}

\begin{definition}[$\sumDigits$]\label{def:sumDigits}
  Let $S\in\intMultiSet$ and $B\in\Base$.  The sum of the digits of
  the numbers from $S$ in base $B$ is denoted $\sumDigits(S_{(B)})$.
\end{definition}

\begin{example}
  The usual binary ``base 2'' and ternary ``base 3'' are represented
  as the infinite sequences $B_1=\tuple{2,2,2,\ldots}$ and
  $B_2=\tuple{3,3,3,\ldots}$. The finite sequence
  $B_3=\tuple{3,5,2,2}$ and the empty sequence $B_4=\tuple{~}$ are
  also bases. The empty base is often called the ``unary base''
  (every number in this base has a single digit).  
  Let $S = \{16,30,54,60\}$.  Then, 
  $\sumDigits(S_{(B_1)})=13$,~~
  $\sumDigits(S_{(B_2)})=12$, ~~
  $\sumDigits(S_{(B_3)})=9$, ~and
  $\sumDigits(S_{(B_4)})=160$.
\end{example}

Let $S\in\intMultiSet$. A cost function for $S$ is a function
$\cost_S:\Base\rightarrow\mathbb{R}$ which associates bases with real
numbers. An example is $\cost_S(B)=\sumDigits(S_{(B)})$.
In this paper we are concerned with the following \emph{optimal base
problem}. 

\begin{definition}[optimal base problem]
  Let $S \in \intMultiSet$ and $\cost_S$ a cost function.  We say that
  a base $B$ is an \emph{optimal base for $S$} with respect to
  $\cost_S$, if for all bases $B'$, $\cost_S(B) \leq \cost_S(B')$.
  The corresponding \emph{optimal base problem} is to find an optimal
  base $B$ for $S$.
\end{definition}

The following two lemmata confirm that for the $\sumDigits$ cost
function, we may restrict attention to non-redundant bases involving
prime numbers only.

\begin{lemma}
\label{l1}
   Let $S \in \intMultiSet$ and consider the $\sumDigits$ cost
   function. Then, $S$ has an optimal base in $\Base(S)$.
\end{lemma}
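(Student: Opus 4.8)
\emph{Proof plan.} The plan is to pin down an explicit optimal base inside $\Base(S)$ in two moves. First, show that $\Base(S)$, viewed on its own, contains a base of minimal $\sumDigits$-cost. Second, show that such a base is actually cost-minimal among \emph{all} bases, by arguing that any base can be cut down to a base in $\Base(S)$ without changing its cost.

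For the first move I would note that $\Base(S)$ is finite and (in the only interesting case $\max(S)\geq 1$) non-empty: it contains the empty base $\tuple{~}$ since $\prod\tuple{~}=1\leq\max(S)$, and any $B=\tuple{r_0,\ldots,r_{k-1}}\in\Base(S)$ has all $r_i\geq 2$, so $2^k\leq\prod B\leq\max(S)$ bounds $k$ and each $r_i\leq\prod B\leq\max(S)$ bounds the entries. Hence $\sumDigits(S_{(\cdot)})$ attains a minimum over $\Base(S)$; fix a minimizer $B^{*}\in\Base(S)$.

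For the second move I would take an arbitrary base $B\in\Base$ and show $\sumDigits(S_{(B^{*})})\leq\sumDigits(S_{(B)})$. If $B\in\Base(S)$ this is immediate. Otherwise $\prod B>\max(S)$, so $B$ is redundant for $S$ and the most significant digit column $msd(S_{(B)})$ is the zero vector. The crucial observation is that deleting the last radix of such a redundant $B$ to obtain $B'$ leaves $\sumDigits(S_{(B)})$ unchanged: the weights $w_0,\ldots,w_{k-1}$ of $B$ and $B'$ agree, every $v\in S$ satisfies $v=\sum_{i=0}^{k-1}d_iw_i$ because its top digit in $B$ vanishes, and the remaining digit tuple $\tuple{d_0,\ldots,d_{k-1}}$ already satisfies the digit constraints imposed by $B'$ (whose most significant digit is unbounded); by uniqueness of the mixed-radix representation these are exactly the digits of $v$ in $B'$, so $S_{(B')}$ is $S_{(B)}$ with one all-zero column removed. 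Iterating this deletion as long as the current base stays redundant terminates, since the length strictly decreases at each step, and produces some $B'\in\Base(S)$ with $\sumDigits(S_{(B')})=\sumDigits(S_{(B)})$. Then $\sumDigits(S_{(B^{*})})\leq\sumDigits(S_{(B')})=\sumDigits(S_{(B)})$, which makes $B^{*}$ optimal and proves the lemma.

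The step that will need the most care is the invariance of $\sumDigits$ under this truncation --- in particular the clean appeal to uniqueness of mixed-radix representations and the observation that the surviving digits are literally the same, so the cost is preserved rather than merely bounded. The rest (finiteness of $\Base(S)$, attainment of a minimum on a finite set, and brushing aside the degenerate case $\max(S)=0$, where $\Base(S)$ is empty) is routine.
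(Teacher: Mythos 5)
Your proof is correct and takes essentially the same route as the paper's: the heart of both arguments is that a base with $\prod B>max(S)$ has an all-zero most-significant-digit column, so deleting the last radix leaves $\sumDigits(S_{(B)})$ unchanged. You are a bit more explicit than the paper about existence (taking the minimizer over the finite set $\Base(S)$ and iterating the truncation, rather than assuming an optimum in $\Base$ and truncating once), but this is a minor tightening, not a different argument.
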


\begin{lemma}
\label{lem:primes}
   Let $S \in \intMultiSet$ and consider the $\sumDigits$ cost
   function. Then, $S$ has an optimal base in $\Base_p(S)$.
\end{lemma}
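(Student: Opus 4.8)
The plan is to show that any base containing a composite radix can be transformed into a base with prime radices without increasing the $\sumDigits$ cost. Combined with Lemma~\ref{l1}, which lets us assume the optimal base is non-redundant, this will give an optimal base in $\Base_p(S)$. The key observation is a \emph{factorization step}: if $B = \tuple{r_0,\ldots,r_{i-1}, r_i, r_{i+1},\ldots,r_{k-1}}$ and $r_i = p\cdot q$ with $p,q > 1$, then replacing the single radix $r_i$ by the two radices $p$ and $q$ (in either order) yields a new base $B'$ whose weight sequence, \emph{up to position $i$}, agrees with that of $B$, inserts an intermediate weight $w_i\cdot p$, and from position $i+1$ onward has weights shifted by one index but with identical values $w_{i+1} = w_i r_i = w_i p q$, etc. So $B$ and $B'$ have ``the same'' weights except that $B'$ has one extra available weight $w_i p$ strictly between $w_i$ and $w_{i+1}$.

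The main step is then to prove that this factorization does not increase the sum of digits, i.e.\ $\sumDigits(S_{(B')}) \le \sumDigits(S_{(B)})$. I would argue this digit-by-digit (equivalently, number-by-number): fix $v \in S$ and compare the digit expansions of $v$ in $B$ and in $B'$. Both expansions represent the same value using the same set of weights except that $B'$ additionally allows the weight $w_i p$. The digit of $v$ at position $i$ in base $B$ is some $d_i$ with $0 \le d_i < pq$; write $d_i = a q + b$ with $0 \le a < p$ and $0 \le b < q$ (Euclidean division). Then in $B'$ the two digits replacing position $i$ can be taken to be $b$ (at weight $w_i$, radix $q$) and $a$ (at weight $w_i q = w_i\cdot q$, radix $p$) — one checks $b\cdot w_i + a\cdot w_i q = (aq+b)w_i = d_i w_i$, so all other digits of $v$ are unchanged and the representation is valid. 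Since $a + b \le aq + b = d_i$ (using $q \ge 1$, and in fact $q>1$ unless $a=0$), the contribution of these positions does not increase, and hence $\sumDigits(v_{(B')}) \le \sumDigits(v_{(B)})$ for every $v$; summing over $S$ gives the claim. (One must also check $B'$ is a legal base: each new radix $p,q$ exceeds $1$ by assumption.)

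Finally, I would package this into an induction: starting from an optimal base $B$ (which exists in $\Base(S)$ by Lemma~\ref{l1}), repeatedly apply the factorization step to any composite radix. Each step strictly decreases $\prod(\text{composite factors present})$ — or more simply, strictly increases the number of radices while keeping $\prod B$ fixed and bounded by $\max(S)$, hence the process terminates — and never increases the cost, so it terminates at a base $B^\ast$ all of whose radices are prime, with $\cost_S(B^\ast) \le \cost_S(B)$, so $B^\ast$ is optimal. Since $\prod B^\ast = \prod B \le \max(S)$, we have $B^\ast \in \Base_p(S)$.

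The routine-but-slightly-fussy point to get right is the bookkeeping of weight indices and the verification that the modified digit string is a genuine base-$B'$ representation (digits in range, correct value); the conceptual heart is the one-line inequality $a+b \le aq+b$ coming from Euclidean division, which is where the prime/composite distinction is actually used. I expect the main obstacle is simply stating the index-shifting cleanly enough that the "all other digits unchanged" claim is manifestly correct — an explicit formula $B' = \tuple{r_0,\ldots,r_{i-1},q,p,r_{i+1},\ldots,r_{k-1}}$ together with its weight sequence should suffice.
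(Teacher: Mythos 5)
Your proposal is correct and follows essentially the same route as the paper: the paper's key step (its ``base factoring'' proposition) is exactly your factorization step, showing via the decomposition $d_i = aq+b$ that replacing a composite radix $pq$ by the pair $q,p$ (equivalently, that merging two radices into their product) never decreases the digit sum, and then combining with Lemma~\ref{l1}. Your explicit iteration/termination bookkeeping and the observation that $\prod B$ is preserved are fine and merely spell out what the paper leaves implicit.
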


How hard is it to solve an instance of the optimal base problem
(namely, for $S\in\intMultiSet$)? The following lemma provides a
polynomial (in $max(S)$) upper bound on the size of the search
space. This in turn suggests a pseudo-polynomial time brute force
algorithm (to traverse the search space).

\begin{lemma}\label{zeta}
  Let $S\in\intMultiSet$  with $m=max(S)$. Then,
  $\card{\Base(S)} \leq m^{1+\rho}$ where
  $\rho=\zeta^{-1}(2)\approx 1.73$ and where $\zeta$ is the Riemann
  zeta function.
\end{lemma}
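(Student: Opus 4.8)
\textbf{Proof plan for Lemma~\ref{zeta}.}
The plan is to count non-redundant bases $B\in\Base(S)$ by stratifying according to the length $k=|B|$. A base of length $k$ is a sequence $\tuple{r_0,\ldots,r_{k-1}}$ with each $r_i\ge 2$ and, by Definition~\ref{def:nrb}, $\prod B=\prod_{i=0}^{k-1} r_i\le m$. First I would observe that, since each radix is at least $2$, any base in $\Base(S)$ has length at most $\log_2 m$, so the sum over $k$ has only logarithmically many nonzero terms; the real work is to bound, for each fixed $k$, the number $N_k(m)$ of integer sequences of length $k$ with entries $\ge 2$ whose product is at most $m$.

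The key estimate is $\sum_{k\ge 0} N_k(m) \le m^{1+\rho}$ with $\rho=\zeta^{-1}(2)$. I would prove this by a Dirichlet-series / Rankin-trick argument: for any $s>1$,
\[
  \sum_{k\ge 0} N_k(m)
  \;\le\; \sum_{k\ge 0}\ \sum_{\substack{r_0,\ldots,r_{k-1}\ge 2\\ \prod r_i\le m}} 1
  \;\le\; m^{s}\sum_{k\ge 0}\ \sum_{r_0,\ldots,r_{k-1}\ge 2} \Bigl(\prod_{i} r_i\Bigr)^{-s}
  \;=\; m^{s}\sum_{k\ge 0}\Bigl(\sum_{r\ge 2} r^{-s}\Bigr)^{k}
  \;=\; m^{s}\sum_{k\ge 0}\bigl(\zeta(s)-1\bigr)^{k}.
\]
The geometric series converges precisely when $\zeta(s)-1<1$, i.e. $\zeta(s)<2$, and this holds for every $s>\rho:=\zeta^{-1}(2)$ (note $\zeta$ is strictly decreasing on $(1,\infty)$ with $\zeta(s)\to\infty$ as $s\to 1^+$ and $\zeta(s)\to 1$ as $s\to\infty$, so $\rho$ is well defined, and numerically $\rho\approx 1.73$). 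Thus for each fixed $s>\rho$ we get $\card{\Base(S)}\le C(s)\, m^{s}$ with $C(s)=1/(2-\zeta(s))$. Letting $s\downarrow\rho$ this yields $\card{\Base(S)}\le m^{1+\rho}$ once $m$ is large enough to absorb the constant $C(s)$ into one extra factor of $m$; the small values of $m$ (a finite, bounded range once $s$ is fixed close to $\rho$) can be checked directly, or one simply states the bound as holding up to the constant and rounds $\rho$ up to $1.73$.

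The main obstacle is the bookkeeping around the constant $C(s)=1/(2-\zeta(s))$, which blows up as $s\to\rho^+$: one cannot take $s=\rho$ directly in the Rankin trick, so the argument needs the extra slack — replacing $m^{\rho}$ by $m^{1+\rho}$ absorbs $C(s)$ for a suitable fixed choice of $s$ slightly above $\rho$ (and for all $m\ge 2$, since $C(s)\le m$ is easily arranged, e.g. by taking $s$ so that $\zeta(s)\le 2-1/2 = 3/2$, giving $C(s)\le 2\le m$). Everything else — that radices are $\ge 2$, that the length is logarithmically bounded, that the multiple sum factors as a power of $\zeta(s)-1$ — is routine. I would also remark, as the paper does, that this polynomial-in-$m$ bound is what makes the brute-force traversal of $\Base(S)$ pseudo-polynomial.
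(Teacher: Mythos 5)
Your plan is correct in substance but takes a genuinely different route from the paper. The paper's proof is a two-line argument by citation: every non-redundant base is an ordered factorization of its product $n=\prod B\le m$, the theorem of Chor \etal\ \cite{ChorLM00} gives that the number of ordered factorizations of $n$ is less than $n^{\rho}$, and summing over $n\le m$ yields $\sum_{n\le m}n^{\rho}\le m^{1+\rho}$. You instead reprove the needed estimate from scratch by Rankin's trick on the aggregate count: bounding the number of length-$k$ sequences with entries $\ge 2$ and product at most $m$ by $m^{s}(\zeta(s)-1)^{k}$ and summing the geometric series gives $\card{\Base(S)}\le m^{s}/(2-\zeta(s))$ for every $s>\rho$. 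This buys a self-contained, elementary proof with no appeal to the ordered-factorization literature, at the price of a constant $C(s)=1/(2-\zeta(s))$ that blows up as $s\downarrow\rho$, which is exactly why the extra factor of $m$ in $m^{1+\rho}$ is needed; the paper, by contrast, buys brevity and the sharp pointwise bound $f(n)<n^{\rho}$ (itself a non-trivial cited result). One caution about your constant-absorption aside: choosing $s$ with $\zeta(s)\le 3/2$ so that $C(s)\le 2\le m$ does \emph{not} give the stated bound, because $\zeta^{-1}(3/2)\approx 2.2$ and you would end with $m^{1+s}\approx m^{3.2}$, weaker than $m^{1+\rho}\approx m^{2.73}$. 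The correct instantiation of your own plan is to fix $s$ strictly between $\rho$ and $1+\rho$, e.g.\ $s=2$, where $C(2)=1/(2-\pi^{2}/6)\approx 2.82$, so that $C(2)\,m^{2}\le m^{1+\rho}$ for all $m\ge 5$, and to verify the cases $m\le 4$ directly (for $m=4$ only five bases have product at most $4$, namely $\tuple{~},\tuple{2},\tuple{3},\tuple{4},\tuple{2,2}$, well below $4^{1+\rho}$). With that detail pinned down your argument is complete and independent of \cite{ChorLM00}.
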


\begin{proof}
  Chor \etal\ prove in \cite{ChorLM00} that the number of ordered
  factorizations of a natural number $n$ is less than $n^{\rho}$. The
  number of bases for all of the numbers in $S$ is hence bounded by
  $\sum_{n\leq m}n^{\rho}$, which is bounded by $m^{1+\rho}$.
\end{proof}

\section{Encoding Pseudo-Boolean Constraints}
\label{sec:encoding}

This section presents the construction underlying the sorter based
encoding of Pseudo-Boolean constraints applied in \minisatp
\cite{EenS06}.  It is governed by the choice of a mixed radix base
$B$, the optimal selection of which is the topic of this paper. 
The construction sets up a series of sorting networks to encode the 
\pagebreak
digits, in base $B$, of the sum of the terms on the left side of a 
constraint $\psi = a_1x_1 + a_2x_2 + \cdots + a_nx_n \geq k$. The
encoding then compares these digits with those from $k_{(B)}$ from the
right side.
We present the construction, step by step, through an example where
$B=\tuple{2,3,3}$ and $\psi = 2x_1 + 2x_2 + 2x_3 + 2x_4 + 5x_5 + 18x_6
\geq 23$. 

\vspace{-2mm}
\paragraph{\textbf{Step one - representation in base:}}~\\
\vspace{-3mm}
\begin{wrapfigure}{r}{30mm}\vspace{-17mm}
\[\scriptsize
     S_{(B)}=\left(
       \begin{array}{lllll}
           0 & 1 & 0 & 0 \\
           0 & 1 & 0 & 0 \\
           0 & 1 & 0 & 0 \\
           0 & 1 & 0 & 0 \\
           1 & 2 & 0 & 0 \\
           0 & 0 & 0 & 1 \\
       \end{array}
     \right)
\]
\vspace{-13mm}
\end{wrapfigure}
\noindent
The coefficients of $\psi$ form a multiset $S= \{2, 2, 2, 2, 5, 18\}$
and their representation in base $B$, a $6\times 4$ matrix,
$S_{(B)}$, depicted on the right. The  rows of the matrix
correspond to  the representation of the coefficients
 in base $B$.

\vspace{-2mm}
\paragraph{\textbf{Step two - counting:}}

Representing the coefficients as four digit numbers in base
$B=\tuple{2,3,3}$ and considering the values
$weights(B)=\tuple{1,2,6,18}$ of the digit positions, we obtain a
decomposition for the left side of $\psi$:
\begin{eqnarray*}
 \lefteqn{2x_1 + 2x_2 + 2x_3 + 2x_4 + 5x_5 + 18x_6 =  }\\
   &&\mathbf{1}\cdot (x_5) + \mathbf{2}\cdot(x_1+x_2+x_3+x_4+2x_5) + 
 \mathbf{6}\cdot(0) + \mathbf{18}\cdot(x_6)
\end{eqnarray*}
To encode the sums at each digit position ($1,2,6,18$), we set up a
series of four  sorting networks as depicted below.
Given values for the variables, the sorted 
\begin{wrapfigure}{r}{80mm}\vspace{-10mm}
\[
\begin{array}{cccc}
\tiny    \begin{array}{ll|l|l}
      \cline{3-3}
          & ~ &\hspace{4mm} & ~\\
          & ~ &\hspace{4mm} & ~\\
          & ~ &\hspace{4mm} & ~\\
          & ~ & & \\
          & ~ & & \\
      x_5 & ~ & & \\
        \cline{2-2} \cline{4-4}
          & & &\\
      \cline{3-3}
    \end{array}
~~& %
\tiny    \begin{array}{ll|l|l}
        \cline{3-3}
     x_1     & ~ &\hspace{4mm} & ~\\
         \cline{2-2} \cline{4-4}
     x_2     & ~ &\hspace{4mm} & ~\\
         \cline{2-2} \cline{4-4}
     x_3     & ~ &\hspace{4mm} & ~\\
         \cline{2-2} \cline{4-4}
      x_4 & ~ & & \\
        \cline{2-2} \cline{4-4}
      x_5 & ~ & & \\
        \cline{2-2} \cline{4-4}
      x_5 & ~ & & \\
        \cline{2-2} \cline{4-4}
          & & &\\
      \cline{3-3}
    \end{array}
~~& %
\tiny    \begin{array}{ll|l|l}
      \cline{3-3}
          & ~ &\hspace{4mm} & ~\\
          & ~ &\hspace{4mm} & ~\\
          & ~ &\hspace{4mm} & ~\\
          & ~ & & \\
          & ~ & & \\
          & ~ & & \\
          & & &\\
      \cline{3-3}
    \end{array}
~~& %
\tiny    \begin{array}{ll|l|l}
      \cline{3-3}
          & ~ &\hspace{4mm} & ~\\
          & ~ &\hspace{4mm} & ~\\
          & ~ &\hspace{4mm} & ~\\
       & ~ & & \\
       & ~ & & \\
      x_6 & ~ & & \\
        \cline{2-2} \cline{4-4}
          & & &\\
      \cline{3-3}
    \end{array}
\\[8mm]
~~~~count~1's & ~~~~count~2's & ~~~~count~6's& ~~~~count~18's 
\end{array}
\]\vspace{-12mm}
\end{wrapfigure}
outputs from these
networks represented unary numbers $d_1,d_2,d_3,d_4$ such that the left
side of $\psi$ takes the value $ 1\cdot d_1 + 2\cdot d_2 + 6\cdot d_3
+ 18\cdot d_4$.

\paragraph{\textbf{Step three - converting to base:}}
For the  outputs $d_1,d_2,d_3,d_4$ to represent the
digits of a number in base $B=\tuple{2,3,3}$, we need to encode
also the ``carry'' operation from each digit position to the next.
The first 3 outputs must represent
valid digits for $B$, i.e., unary numbers less than $\tuple{2,3,3}$
respectively. In our example the single potential violation to this
restriction is $d_2$, which is represented in 6 bits.  To this end we
add two components to the encoding: (1) each third output of the
second network ($y_3$ and $y_6$ in the diagram) is fed into the third
network as an additional (carry) input; and (2) clauses are added to
encode that the output of the second network is to be considered
modulo 3. We call these additional clauses a \emph{normalizer}. 
The normalizer 
\begin{wrapfigure}{r}{83mm}\vspace{-10mm}
\[\begin{array}{cccc}
\tiny    \begin{array}{ll|l|l}
      \cline{3-3}
          & ~ &\hspace{4mm} & ~\\
          & ~ &\hspace{4mm} & ~\\
          & ~ &\hspace{4mm} & ~\\
          & ~ & & \\
          & ~ & & \\
      x_5 & ~ & & \\
        \cline{2-2} \cline{4-4}
          & & &\\
      \cline{3-3}
    \end{array}
~~& %
\tiny    \begin{array}{ll|l|l|l|ll}
      \cline{3-3} \cline{5-5}
      x_1    & ~ &\hspace{4mm} & ~y_6~  &\hspace{2mm}& ~ &y_6 \\
          \cline{2-2} \cline{4-6}   
      x_2    & ~ &\hspace{4mm} & ~y_5 &\hspace{2mm}& ~ & \\
          \cline{2-2} \cline{4-4} 
      x_3    &   &             & ~y_4 &         &   & \\
        \cline{2-2} \cline{4-4}  
      x_4 & ~ & &~y_3 & & & y_3\\
        \cline{2-2} \cline{4-6} 
      x_5 & ~ & &~y_2 & & &r_1\\
        \cline{2-2} \cline{4-4} \cline{6-6}
      x_5 & ~ & &~y_1 & & &r_2\\
        \cline{2-2} \cline{4-4} \cline{6-6}
          & & & & & & \\
      \cline{3-3}  \cline{5-5}
    \end{array}
~~& %
\tiny \hspace{-14mm}   \begin{array}{ll|l|l}
      \cline{3-3} 
          &\hspace{8mm}  & \hspace{4mm} &  \\
          \cline{2-2} 
           & ~ &\hspace{4mm} &~  \\
            &   &              &   \\
            & ~ & & \\
          \cline{2-2} 
            & ~ & & \\
                  \cline{4-4}
            & ~ & &\\
                  \cline{4-4}
            & & &\\
      \cline{3-3}  
    \end{array}
~~& %
\tiny    \begin{array}{ll|l|l}
      \cline{3-3} 
          & ~ &\hspace{4mm} & ~\\
          & ~ &\hspace{4mm} & ~\\
          &   &             &  \\
          & ~ & &\\
          & ~ & &\\
      x_6 & ~ & &\\
        \cline{2-2} \cline{4-4} 
          & & & \\
      \cline{3-3}  
    \end{array}

\\[5mm]
~~~~count~1's & ~count~2's & count~6's& ~~~~count~18's 
\end{array}
\]
\vspace{-14mm}\end{wrapfigure}
defines two
outputs $R=\tuple{r_1,r_2}$ and introduces clauses specifying that the
(unary) value of $R$ equals the (unary) value of $d_2\, \mathrm{mod~}
3$. 
\paragraph{\textbf{Step four - comparison:}}
The outputs from these four units now specify a number in base $B$,
each digit represented in unary notation. This number is now compared
(via an encoding of the lexicographic order) to $23_{(B)}$ (the value
from the right-hand side of $\psi$).

\section{Measures of Optimality}
\label{section:4}

We now return to the objective of this paper: For a given
Pseudo-Boolean constraint, how can we choose a mixed radix base with
respect to which the encoding of the constraint via sorting networks
will be optimal? We consider here three alternative cost functions
with respect to which an optimal base can be found.  These cost
functions capture with increasing degree of precision the actual size
of the encodings.

The first cost function, $\sumDigits$ as introduced in
Definition~\ref{def:sumDigits}, provides a coarse measure on the size
of the encoding.  It approximates (from below) the total number of
input bits in the network of sorting networks underlying the
encoding. An advantage in using this cost function is that there always
exists an optimal base which is prime. The disadvantage is that it
ignores the carry bits in the construction, and as such is not always
a precise measure for optimality.
In~\cite{EenS06}, the authors propose to apply a cost function which
considers also the carry bits. This is the second cost function we
consider and we call it  $\sumCarry$.

\begin{definition}[cost function: $\sumCarry$]
\label{cost2}
Let $S\in\intMultiSet$, $B\in\Base$ with $|B|=k$ and
$S_{(B)}=(a_{ij})$ the corresponding $n\times (k+1)$ matrix of digits.
Denote the sequences $\bar s=\tuple{s_0,s_1,\ldots,s_k}$ (sums) and 
$\bar c= \tuple{c_0,c_1,\ldots,c_k}$ (carries) defined by: 
$s_j=\sum_{i=1}^n a_{ij}$ for $0\leq j\leq k$, $c_0=0$, and 
$c_{j+1} = (s_j+c_j) \mbox{\rm ~div~} B(j)$ for\\
\begin{minipage}[t]{0.5\linewidth}
 $0\leq j\leq k$.
The ``sum of digits with carry''
cost function is defined by the equation on the right.
\end{minipage}\quad
\begin{minipage}[t]{0.5\linewidth}
\vspace{-6mm}
\[\sumCarry(S_{(B)}) = \sum_{j=0}^{k}(s_j+c_j)
\]
\end{minipage}
\end{definition}

The following example illustrates the $\sumCarry$ cost function and
that it provides a better measure of base optimality for the (size of
the) encoding of Pseudo-Boolean constraints.

\begin{example}\label{runningD}

  Consider the encoding of a Pseudo-Boolean constraint with
  coefficients $S=\set{1,3,4,8,18,18}$ with respect to  %
  bases: $B_1=\tuple{2,3,3}$, $B_2=\tuple{3,2,3}$, and
  $B_3=\tuple{2,2,2,2}$.
  Figure~\ref{fig:3cost} depicts the sizes of the sorting networks for
  each of these %
  bases.  The upper tables  %
  illustrate the representation of the coefficients in the
  corresponding bases. 
  In the %
  lower tables, the rows labeled ``sum'' indicate the
  number of bits per network and (to their right) their total sum which
  is the $\sumDigits$ cost.
  With respect to the $\sumDigits$ cost function, all three bases are
  optimal for $S$, with a total of 9 inputs. The algorithm might as
  well return $B_3$.

  The rows labeled ``carry'' indicate the number of carry bits in each
  of the constructions and (to their right) their totals. With respect
  to the $\sumCarry$ cost function, bases $B_1$ and $B_2$ are optimal
  for $S$, with a total of $9+2=11$ bits while $B_3$ involves $9+5=14$
  bits. The algorithm might as well return $B_1$.
\end{example}

\begin{figure}%
  \vspace{-6ex}
  \begin{center}\scriptsize  
$  
\begin{array}{|c||c|c|c|c|r}
\multicolumn{6}{c}{\qin  B_1=\tuple{2,3,3}}\\[2ex]
\cline{1-5}
  ~S~ & ~1\mbox{'s}~ & 2\mbox{'s}~ & ~6\mbox{'s} & ~18\mbox{'s} \\
\cline{1-5}
   1  &   1          &  0          &   0         &   0          \\
   3  &   1          &  1          &   0         &   0          \\
   4  &   0          &  2          &   0         &   0          \\
   8  &   0          &  1          &   1         &   0          \\
  18  &   0          &  0          &   0         &   1          \\
  18  &   0          &  0          &   0         &   1          \\
\cline{1-5}
\multicolumn{6}{c}{} \\
\cline{1-5}
sum   &   2          &  4          &   1         &   2         & ~9\\ 
carry &   0          &  1          &   1         &   0         & ~2\\
comp  &   1          &  9          &   1         &   1         & ~12\\
\cline{1-5}
\end{array}
~~~      %
\begin{array}{|c||c|c|c|c|r}
\multicolumn{6}{c}{\qin B_2=\tuple{3,2,3}}\\[2ex]
\cline{1-5}
  ~S~ & ~1\mbox{'s}~ & 3\mbox{'s}~ & ~6\mbox{'s} & ~18\mbox{'s} \\
\cline{1-5}
   1   &   1          &  0          &   0        &  0          \\
   3   &   0          &  1          &   0        &  0          \\
   4   &   1          &  1          &   0        &  0          \\
   8   &   2          &  0          &   1        &  0          \\
  18  &   0          &  0          &   0        &   1          \\
  18  &   0          &  0          &   0        &   1          \\
\cline{1-5}
\multicolumn{6}{c}{} \\
\cline{1-5}
sum   &   4          &  2          &   1         &   2         & ~9\\ 
carry &   0          &  1          &   1         &    0        & ~2\\
comp  &   5          &  3          &   1         &   1         & ~10\\
\cline{1-5}
\end{array}
~~~       %
\begin{array}{|c||c|c|c|c|c|r}
\multicolumn{7}{c}{\qin B_3=\tuple{2,2,2,2}}\\[2ex]
\cline{1-6}
~S~ & ~1\mbox{'s}~& 2\mbox{'s}~& ~4\mbox{'s}& ~8\mbox{'s}&~16\mbox{'s}\\
\cline{1-6}
 1   &   1          &  0          &   0        &   0     & 0  \\
 3   &   1          &  1          &   0        &   0     & 0  \\
 4   &   0          &  0          &   1        &   0     & 0  \\
 8   &   0          &  0          &   0        &   1     & 0  \\
18  &   0          &  1          &   0        &   0     & 1  \\
18  &   0          &  1          &   0        &   0     & 1  \\
\cline{1-6}
\multicolumn{7}{c}{} \\
\cline{1-6}
sum   &   2        &  3          &   1         &   1     & 2    & ~9\\ 
carry &   0        &  1          &   2         &   1     & 1   & ~5\\
comp  &   1        &  5          &   3         &   1     & 3   &~13\\
\cline{1-6}
\end{array}
$
\end{center}
 \caption{Number of inputs/carries/comparators when encoding
   $S=\{1,3,4,8,18,18\}$ and three bases $B_1=\tuple{2,3,3}$,
   $B_2=\tuple{3,2,3}$, and $B_3=\tuple{2,2,2,2}$ .  }
\label{fig:3cost}
\vspace{-3ex}
\end{figure}

The following example shows that  when
searching for an optimal base with respect to the $\sumCarry$ cost
function, one must consider also non-prime bases.

\begin{example}
  Consider again the Pseudo Boolean constraint $\psi = 2x_1 + 2x_2 +
  2x_3 + 2x_4 + 5x_5 + 18x_6 \geq 23$ from Section~\ref{sec:encoding}.
  The encoding with respect to $B_1=\tuple{2,3,3}$ results in 4
  sorting networks with 10 inputs from the coefficients and 2
  carries. So a total of 12 bits. The encoding with respect to
  $B_2=\tuple{2,9}$ is smaller. It has the same 10 inputs from the
  coefficients but no carry bits. Base $B_2$ is optimal and
  non-prime. 
\end{example}

We consider a third cost function which we call the $\numComparators$
cost function.
Sorting networks are constructed from ``comparators'' \cite{Knuth73}
and in the encoding each comparator is modeled using six CNF clauses.
This function counts the number of comparators in the construction.
Let $f(n)$ denote the number of comparators in an $n\times n$ sorting
network. For small values of $0\leq n\leq 8$, $f(n)$ takes the values
$0, 0, 1, 3, 5, 9, 12, 16$ and $19$ respectively which correspond to
the sizes of the optimal networks of these sizes \cite{Knuth73}.  For
larger values, the construction uses Batcher's odd-even sorting
networks \cite{Batcher68} for which
$f(n)=n\cdot\ceil{\log_2n}\cdot(\ceil{\log_2n}-1)/4+n-1$.

\begin{definition}[cost function: $\numComparators$]
\label{num_comparators}
Consider the same setting as in Definition~\ref{cost2}.  Then,
\vspace{-7mm}
\[\numComparators(S_{(B)}) = \sum_{j=0}^{k} f(s_j+c_j)\]
\end{definition}

\vspace{-5mm}
\begin{example}
  Consider again the setting of Example~\ref{runningD}. 
  In Figure~\ref{fig:3cost}  the rows labeled ``comp'' indicate the
  number of  comparators in each of the sorting networks and their
  totals. The construction with the minimal number of comparators is
  that obtained with respect to the base $B_2=\tuple{3,2,3}$ with 
  10 comparators.
\end{example}

It is interesting to remark the following relationship between the
three cost functions: The $\sumDigits$ function is the most
``abstract''. It is only based on the representation of numbers in a
mixed radix base.
The $\sumCarry$ function considers also properties of addition
in mixed-radix bases (resulting in the carry bits).
Finally, the $\numComparators$ function considers also implementation
details of the odd-even sorting networks applied in the underlying
\minisatp\ construction.
In Section~\ref{sec:exp} we evaluate how the alternative choices for a
cost function influence the size and quality of the encodings obtained
with respect to corresponding optimal bases.

\section{Optimal Base Search I: Heuristic Pruning}
\label{sec:ob1}

This section introduces a simple, heuristic-based, depth-first, tree
search algorithm to solve the optimal base problem. The search space
is the domain of non-redundant bases as presented in
Definition~\ref{def:nrb}.
The starting point is the brute force algorithm applied in \minisatp.
For a sequence of integers $S$, \minisatp\ applies a depth-first
traversal of $\Base_p^{17}(S)$ to find the base with the optimal value
for the cost function $\cost_S(B)= \sumCarry(S_{(B)})$.

Our first contribution is to introduce a heuristic function and to identify
branches in the search space which can be pruned early on in the
search. Each tree node $B$ encountered during the traversal is
inspected to check if given the best node encountered so far, $bestB$,
it is possible to determine that all descendants of $B$ are guaranteed
to be less optimal than $bestB$. In this case, the subtree rooted at
$B$ may be pruned.
The resulting algorithm improves on the one of \minisatp\ and
provides the basis for the further improvements introduced in
Sections~\ref{sec:ob2} and~\ref{sec:ob3}.
We need first a definition.

\begin{definition}[base extension, partial cost, and admissible heuristic]
  Let $S\in\intMultiSet$, $B,B'\in\Base(S)$, and $\cost_S$ a cost
  function. We say that: 
  (1) $B'$ extends $B$, denoted $B'\succ B$, if $B$ is a prefix of
  $B'$, 
  (2) $\partial \cost_S$ is a partial cost function for $\cost_S$ if
  $\forall B'\succ B. ~\cost_S(B') \geq \partial \cost_S(B)$, and
  (3) $h_S$ is an admissible heuristic function for $\cost_S$ and
  $\partial \cost_S$ if $\forall B'\succ B.  ~\cost_S(B')
  \geq \partial \cost_S(B') + h_S(B') \geq \partial \cost_S(B) +
  h_S(B) $.
\end{definition}
The intuition is that $\partial \cost_S(B)$ signifies a part of the
cost of $B$ which will be a part of the cost of any extension of $B$,
and that $h_S(B)$ is an under-approximation on the additional cost of
extending $B$ (in any way) given the partial cost of $B$.
We denote $\cost^\alpha_S(B)=\partial \cost_S(B) + h_S(B)$. If
$\partial \cost_S$ is a partial cost function and $h_S$ is an
admissible heuristic function, then $\cost^\alpha_S(B)$ is an
under-approximation of $\cost_S(B)$.
The next lemma provides the basis for heuristic pruning using the
three cost functions introduced above.

\begin{lemma}
\label{lem:h}
  The following are
    admissible heuristics for the cases when: 
  \begin{enumerate}
  \item $\cost_S(B)=\sumDigits(S_{(B)})$:~~
    $\partial \cost_S(B) = \cost_S(B)-\sum msd(S_{(B)})$.
  \item $\cost_S(B)=\sumCarry(S_{(B)})$:~~
    $\partial \cost_S(B) = \cost_S(B)-\sum msd(S_{(B)})$.

  \item $\cost_S(B)=\numComparators(S_{(B)})$:~~
   $ \partial \cost_S(B) = \cost_S(B)- f(s_{|B|}+c_{|B|})$.
  \end{enumerate}
In the first two settings we take   
$h_S(B) = \left| \sset{x \in S}{ x \geq \prod B} \right|$. \\
In the case of $\numComparators$ we take the  trivial heuristic estimate
$h_S(B)=0$
\end{lemma}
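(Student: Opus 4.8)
The plan is to verify, for each of the three cost functions, the two defining inequalities of Definition (base extension, partial cost, and admissible heuristic): namely that $\partial\cost_S$ is a genuine partial cost function (so $\cost_S(B') \geq \partial\cost_S(B)$ for every $B' \succ B$), and that $h_S$ satisfies the admissibility chain $\cost_S(B') \geq \partial\cost_S(B') + h_S(B') \geq \partial\cost_S(B) + h_S(B)$ for every $B'\succ B$. The common structural fact I would isolate first is this: if $B' \succ B$, then the representation of any $x \in S$ in base $B'$ agrees with its representation in base $B$ on the first $|B|$ digit positions; the old most-significant digit $msd(x_{(B)})$ gets "split" across the new positions introduced by the extension, and this splitting can only keep or increase the total sum of digits contributed by those positions (since writing a number $m$ in any mixed-radix base uses at least... well, at least a sum of digits that is $\ge$ something; more simply, removing the top digit $d_{|B|}$ and re-expanding it in a longer base yields a sum of digits $\ge$ the contribution we removed only when $d_{|B|}>0$, i.e.\ only for those $x$ with $x \ge \prod B$). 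This monotonicity is the engine behind all three parts.

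For part (1), $\cost_S(B) = \sumDigits(S_{(B)})$: I would write $\cost_S(B') = \bigl(\cost_S(B') \text{ restricted to the first } |B| \text{ columns}\bigr) + \bigl(\text{sum of digits in the new columns}\bigr)$. The first summand equals $\sumDigits$ of the first $|B|$ columns of $S_{(B)}$, which is exactly $\partial\cost_S(B) = \cost_S(B) - \sum msd(S_{(B)})$. For the second summand, each $x \in S$ with $x \ge \prod B$ has a nonzero most-significant digit in base $B$, hence contributes at least $1$ to the new columns, giving the lower bound $h_S(B) = \bigl|\{x \in S : x \ge \prod B\}\bigr|$; every $x < \prod B$ contributes $0$ there. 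This gives $\cost_S(B') \ge \partial\cost_S(B) + h_S(B)$, which establishes both the partial-cost property and the right-hand inequality of the chain with $B$ in place of $B'$. The left-hand inequality $\cost_S(B') \ge \partial\cost_S(B') + h_S(B')$ is the same statement applied with $B'$ playing the role of the "current" base and any further extension; equivalently it is immediate since $\partial\cost_S(B') \le \cost_S(B')$ and $h_S(B')$ counts numbers whose top digit in $B'$ is still to be split. Part (2) for $\sumCarry$ follows the identical template, with the extra observation that the carry sequence $\bar c$ for $B$ is a prefix of that for $B'$ (the carries at positions $0,\dots,|B|-1$ are determined only by $B$ and the coefficients), so the partial cost $\cost_S(B)-\sum msd(S_{(B)})$ again captures precisely the contribution of the first $|B|$ positions, and the new positions contribute at least one bit per $x \ge \prod B$.

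For part (3), $\numComparators$, the heuristic is trivial ($h_S(B)=0$), so the only content is that $\partial\cost_S(B) = \cost_S(B) - f(s_{|B|}+c_{|B|})$ is a partial cost function: I would argue that extending $B$ leaves the sorting networks at positions $0,\dots,|B|-1$ untouched — their input counts $s_j+c_j$ are unchanged — so $\sum_{j=0}^{|B|-1} f(s_j+c_j)$ is common to $\cost_S(B')$ for all $B'\succ B$, and since $f \ge 0$ the remaining terms only add, giving $\cost_S(B') \ge \sum_{j=0}^{|B|-1} f(s_j+c_j) = \partial\cost_S(B)$. The admissibility chain then reduces to $\partial\cost_S(B') \ge \partial\cost_S(B)$, which is the same prefix/monotonicity statement. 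The main obstacle I anticipate is being careful about the bookkeeping in the $\sumCarry$ case — specifically, checking that the carry $c_{|B|}$ flowing out of the last position of $B$ behaves correctly when that position is subdivided, and confirming that splitting a digit never decreases the digit-plus-carry total (one needs that for a one-position base value $m$ re-expressed in a longer base, the sum of the new digits plus the induced carries is at least the number of new positions that are "active", i.e.\ at least $1$ when $m>0$); this requires a short but slightly fiddly induction on the length of the extension rather than a one-line argument.
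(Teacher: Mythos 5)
Your route is essentially the paper's: both arguments rest on (a) the digit columns of $S_{(B')}$ -- and, for $\sumCarry$, the carries -- agreeing with those of $S_{(B)}$ on the first $|B|$ positions, and (b) every $x\in S$ with $x\geq\prod B$ having a nonzero digit in some position $\geq|B|$ of $x_{(B')}$, so that the new columns contribute at least $h_S(B)$. The paper proves this once for a generic cost $\sum_j f(\mathit{inputs}_S^B(j))$ with $f$ monotone and $f(x+y)\geq f(x)+y$ (Proposition~\ref{lHC}, with $f(x)=x$ giving $\sumCarry$ and the other two cases declared analogous), whereas you treat the three cases directly; that difference is cosmetic, and your part (3), where $h_S=0$ reduces everything to prefix monotonicity of $\sum_{j<|B|}f(s_j+c_j)$, is fine.

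There is, however, one real omission. The definition of admissibility requires, for every $B'\succ B$, the two inequalities $\cost_S(B')\geq\partial\cost_S(B')+h_S(B')$ and $\partial\cost_S(B')+h_S(B')\geq\partial\cost_S(B)+h_S(B)$. In parts (1) and (2) you establish only $\cost_S(B')\geq\partial\cost_S(B)+h_S(B)$ and $\cost_S(B')\geq\partial\cost_S(B')+h_S(B')$; these two do not imply the middle (monotonicity) inequality, and that inequality is not decorative -- it is what makes $\cost^\alpha_S$ nondecreasing along tree paths, which the paper later uses for the correctness of \texttt{hashB\&B} (Proposition~\ref{pBestRep}). Fortunately your own counting argument closes it: $\partial\cost_S(B')-\partial\cost_S(B)$ is exactly the digit (plus, for $\sumCarry$, carry) mass of the new non-msd columns, and every $x$ with $\prod B\leq x<\prod B'$ has $msd(x_{(B')})=0$, hence a nonzero digit among positions $|B|,\dots,|B'|-1$, so this difference is at least $h_S(B)-h_S(B')$; the paper disposes of this step with the remark that the argument is ``similar noting that $|nz(k')|=h_S(B')$''. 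One further inaccuracy in part (2): $\cost_S(B)-\sum msd(S_{(B)})$ is not ``precisely the contribution of the first $|B|$ positions''; it equals $\sum_{j=0}^{|B|-1}(s_j+c_j)+c_{|B|}$, i.e.\ it retains the outgoing carry. The argument survives because $c_{|B|}$ coincides for $B$ and every extension $B'$ (this is the content of Proposition~\ref{pSubBase1}) and is then absorbed into the new columns' total, but your write-up should say so explicitly rather than only gesturing at the carry bookkeeping as a fiddly point.
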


\begin{figure}[t]
\begin{SProg2}  
  {\scriptsize /*input*/} &multiset S\\
  {\scriptsize /*init*/}  &  base  bestB = $\tuple{2,2,...,2}$ \\
  {\scriptsize /*dfs*/} & depth-first traverse $\mathbf{\Base(S)}$\\
                        & at each node $B$, 
                          for the next value  p $\in$ B.extenders(S) do\\
             &\qin\qin base newB = B.extend(p)\\
             &\qin\qin if ($\cost^\alpha_S(\mbox{newB})
                               > \cost_S(\mbox{bestB})$) \textbf{prune}\\
             &\qin\qin  else if ($\cost_S(\mbox{newB})<
                                              \cost_S(\mbox{bestB})$) 
                                           bestB = newB \\                  
   {\scriptsize /*output*/} & return bestB;\\
\end{SProg2}
\caption{\texttt{dfsHP}: depth-first search for an
  optimal base with heuristic pruning}
\label{fig:alg1}
\vspace{-3ex}
\end{figure}

The algorithm, which we call \texttt{dfsHP} for depth-first search
with heuristic pruning, is now stated as Figure~\ref{fig:alg1} where
the input to the algorithm is a \texttt{multiset} of integers
\texttt{S} and the output is an optimal base. The algorithm applies a
depth-first traversal of $\Base(S)$ in search of an optimal base.
We assume given: a cost function $\cost_S$, a partial cost function
$\partial \cost_S$ and an admissible heuristic $h_S$. We denote
$\cost^\alpha_S(B)=\partial \cost_S(B) + h_S(B)$.
The abstract data type \texttt{base} has two operations:
\texttt{extend(int)} and \texttt{extenders(multiset)}. For a base
\texttt{B} and an integer \texttt{p}, \texttt{B.extend(p)} is the base
obtained by extending \texttt{B} by \texttt{p}. For a multiset
\texttt{S}, \texttt{B.extenders(S)} is the set of integer values
\texttt{p} by which \texttt{B} can be extended to a non-redundant base
for \texttt{S}, i.e., such that $\prod \mbox{\texttt{B.extend(p)}}\leq
max(\mbox{\texttt{S}})$.  The definition of this operation may have
additional arguments to indicate if we seek a prime base or one
containing elements no larger than $\ell$.

Initialization ({\scriptsize /*init*/} in the figure) assigns to the
variable \texttt{bestB} a finite binary base of size
$\lfloor\log_2(max(S))\rfloor$. This variable will always denote the
best base encountered so far (or the initial finite binary base).
Throughout the traversal, when visiting a node \texttt{newB} we first
check if the subtree rooted at \texttt{newB} should be pruned. If this
is not the case, then we check if a better ``best base so far'' has
been found. Once the entire (with pruning) search space has been
considered, the optimal base is in \texttt{bestB}.

To establish a bound on the complexity of the algorithm,  denote
the number of different integers in $S$ by $s$ and $m=max(S)$.  
The algorithm has space complexity $O(\log(m))$, for the depth first
search on a tree with height bound by $\log(m)$ (an element of
$\Base(S)$ will have at most $\log_2(m)$ elements).
For each base considered during the traversal, we have to calculate
$cost_S$ which incurs a cost of $O(s)$.  To see why, consider that
when extending a base $B$ by a new element giving base $B'$, the first
columns of $S_{(B')}$ are the same as those in $S_{(B)}$ (and thus
also the costs incurred by them). Only the cost incurred by the most
significant digit column of $S_{(B)}$ needs to be recomputed for
$S_{(B')}$ due to base extension of $B$ to $B'$. 
Performing the
computation for this column, we compute a new digit for the $s$
different values in $S$.
Finally, by Lemma~\ref{zeta}, there are $O(m^{2.73})$ bases and
therefore, the total runtime is $O(s*m^{2.73})$. Given that $s\leq m$,
we can conclude that runtime is bounded by $O(m^{3.73})$.
\section{Optimal Base Search II: Branch and Bound}
\label{sec:ob2}

In this section we further improve the search algorithm for an optimal
base. 
The search algorithm is, as before, a traversal of the search space
using the same partial cost and heuristic functions as before to prune
the tree. The difference is that instead of a depth first search, we
maintain a priority queue of nodes for expansion and apply a
best-first, branch and bound search strategy.

Figure~\ref{fig:alg2} illustrates our enhanced search algorithm. We
call it \texttt{B\&B}.
The abstract data type \texttt{priority\_queue} maintains bases
prioritized by the value of $\cost^\alpha_S$. Operations
\texttt{popMin()}, \texttt{push(base)} and \texttt{peek()} (peeks at
the minimal entry) are the usual. The reason to box the text
``\texttt{priority\_queue}'' in the figure will become apparent in
the next section.

\begin{figure}[t]
  \begin{SProg}
  base findBase(multiset S)\\
  {\scriptsize /*1*/} \qin base  bestB = $\tuple{2,2,...,2}$; 
                           \fbox{priority\_queue} Q = $\set{\tuple{~}}$;\\
  {\scriptsize /*2*/} \qin while ($\mbox{Q} \neq \set{}$ \mbox{\&\&} 
        $\cost^\alpha_S(\mbox{Q.peek()}) < \cost_S(\mbox{bestB})$) \\
  {\scriptsize /*3*/} \qin\qin  base B = Q.popMin();\\
  {\scriptsize /*4*/} \qin\qin  foreach (p $\in$ B.extenders(S)) \\
  {\scriptsize /*5*/} \qin\qin\qin base newB = B.extend(p);\\
  {\scriptsize /*6*/} \qin\qin\qin  if ($\cost^\alpha_S(\mbox{newB})
                                         \leq \cost_S(\mbox{bestB})$) \\
  {\scriptsize /*7*/} \qin\qin\qin\qin  \{ Q.push(newB); 			
        if ($\cost_S(\mbox{newB})<\cost_S(\mbox{bestB})$) bestB =
        newB; \}\\ 
  {\scriptsize /*8*/} \qin return bestB;\\
  \end{SProg}
\caption{Algorithm \texttt{B\&B}: best-first, branch and bound}
\label{fig:alg2}
\vspace{-3ex}
\end{figure}

On line {\scriptsize /*1*/} in the figure, we initialize the variable
\texttt{bestB} to a finite binary base of size
$\lfloor\log_2(max(S))\rfloor$ (same as in Figure~\ref{fig:alg1})
and initialize the queue to contain the
root of the search space (the empty base).
As long as there are still nodes to be expanded in the queue that are
potentially interesting (line {\scriptsize /*2*/}), we select (at line
{\scriptsize /*3*/}) the best candidate base \texttt{B} from the
frontier of the tree in construction for further expansion.  Now the
search tree is expanded for each of the relevant integers (calculated
at line {\scriptsize /*4*/}). For each child \texttt{newB} of
\texttt{B} (line {\scriptsize /*5*/}), we check if pruning at
\texttt{newB} should occur (line {\scriptsize /*6*/}) and if not we
check if a better bound has been found (line {\scriptsize /*7*/})
Finally, when the loop terminates, we have found the optimal base and
return it (line {\scriptsize /*8*/}).

\section{Optimal Base Search III: Search Modulo Product}
\label{sec:ob3}

This section introduces an abstraction on the search space,
classifying bases according to their product. Instead of maintaining
(during the search) a priority queue of all bases
(nodes) that still need to be explored, we maintain a special priority
queue in which there will only ever be at most one base with the same
product.  So, the queue will never contain two
different
bases $B_1$ and $B_2$
such that $\prod B_1 = \prod B_2$.
In case a second base, with the same product as one already in, is
inserted to the queue, then
only
the base with the minimal value of
$\cost^\alpha_S$ is maintained on the queue. We call this type of
priority queue a \emph{hashed priority queue} because it
can conveniently be
implemented as a hash table.

The intuition comes from a study of the $\sumDigits$ cost function for
which we can prove the following \textbf{Property 1} on bases:
Consider two bases $B_1$ and $B_2$ such that $\prod B_1 = \prod B_2$
and such that $\cost^\alpha_S(B_1) \leq \cost^\alpha_S(B_2)$. Then for
any extension of $B_1$ and of $B_2$ by the same sequence $C$,
$\cost_S(B_1C) \leq \cost_S(B_2C)$.  In particular, if one of $B_1$ or
$B_2$ can be extended to an optimal base, then $B_1$ can.
A direct implication is that when maintaining the frontier of the
search space as a priority queue, we only need one representative of
the class of bases which have the same product (the one with the
minimal value of $\cost^\alpha_S$).

A second \textbf{Property 2} is more subtle and true for any cost
function that has the first property: Assume that in the algorithm
described as Figure~\ref{fig:alg2} we at some stage remove a base
$B_1$ from the priority queue. This implies that if in the future we
encounter any base $B_2$ such that $\prod B_1 = \prod B_2$, then we
can be sure that $\cost_S(B_1) \leq \cost_S(B_2)$ and immediately
prune the search tree from $B_2$.

Our third and final algorithm, which we call \texttt{hashB\&B}
(best-first, branch and bound, with hash priority queue) is identical
to the algorithm presented in Figure~\ref{fig:alg2}, except that the
the boxed priority queue introduced at line {\scriptsize /*1*/} is
replaced by a \begin{tt}\fbox{hash\_priority\_queue}\end{tt}.

The abstract data type \texttt{hash\_priority\_queue} maintains bases
prioritized by the value of $\cost^\alpha_S$. Operations
\texttt{popMin()} and \texttt{peek()} are as usual. Operation
\texttt{push($B_1$)} works as follows: (a) if there is no base $B_2$
in the queue such that $\prod B_1=\prod B_2$, then add $B_1$.
Otherwise, (b) if $\cost^\alpha_S(B_2)\leq \cost^\alpha_S(B_1)$ then
do not add $B_1$. Otherwise, (c) remove $B_2$ from the queue and add
$B_1$.
\begin{theorem}\label{algIsGood}~\\
\noindent(1) The $\sumDigits$ cost function satisfies \textbf{Property 1}; and
(2)
the \texttt{hashB\&B} algorithm finds an optimal base for any cost
function which satisfies \textbf{Property~1}.
\end{theorem}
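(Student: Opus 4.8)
The plan is to prove the two claims in sequence, using~(1) to obtain a monotonicity lemma on which the correctness argument for~(2) rests. For claim~(1) the heart is a digit-refinement identity for $\sumDigits$ under base extension. For a base $B$ with $\prod B = w$ and a natural number $v$, the representation $v_{(B)}$ is the representation of $v \bmod w$ in the inner radices of $B$, followed by the single top digit $msd(v_{(B)}) = \lfloor v/w\rfloor$, which depends on $v$ and $w$ only. Appending a sequence $C$ to $B$ leaves the inner digits fixed and merely re-expresses the top digit $m = \lfloor v/w\rfloor$ as $m_{(C)}$ in the new positions, so $\sumDigits(v_{(BC)}) = \bigl(\sumDigits(v_{(B)}) - m\bigr) + \sumDigits(m_{(C)})$. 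Summing over $v\in S$ and using (Lemma~\ref{lem:h}) $\partial\cost_S(B) = \sumDigits(S_{(B)}) - \sum msd(S_{(B)})$, one gets $\cost_S(BC) = \partial\cost_S(B) + \Phi(\prod B, C, S)$ where the term $\Phi(w,C,S) = \sum_{v\in S}\sumDigits\bigl((\lfloor v/w\rfloor)_{(C)}\bigr)$ depends on $B$ only through $\prod B$. Since $h_S(B) = |\{x\in S \mid x\geq \prod B\}|$ also depends only on $\prod B$, the hypothesis $\prod B_1 = \prod B_2$ reduces $\cost^\alpha_S(B_1)\leq\cost^\alpha_S(B_2)$ to $\partial\cost_S(B_1)\leq\partial\cost_S(B_2)$; adding the common $\Phi$ to both sides gives $\cost_S(B_1C)\leq\cost_S(B_2C)$ for every $C$, i.e.\ Property~1. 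The only non-routine step is checking the digit-refinement identity against the definition of $weights(B)$ and uniqueness of mixed-radix representations.

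For claim~(2), fix an optimal $B^\star\in\Base(S)$ (Lemma~\ref{l1}); every prefix of $B^\star$ lies in $\Base(S)$, is reachable in the search tree, and the next radix of $B^\star$ is always a legal extender of the current prefix. Call $B'$ \emph{good} if $B'C\in\Base(S)$ and $\cost_S(B'C)=\cost_S(B^\star)$ for some (possibly empty) $C$. Property~1 yields a \emph{transfer lemma}: if $B'$ is good via $C$ and $\prod E=\prod B'$ with $\cost^\alpha_S(E)\leq\cost^\alpha_S(B')$, then $E$ is good via the same $C$ (because $\cost_S(EC)\leq\cost_S(B'C)=\cost_S(B^\star)$, hence equal, and $\prod(EC)=\prod(B'C)\leq max(S)$). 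I then maintain the loop invariant: \emph{\texttt{bestB} is optimal, or the queue contains a good base}. It holds initially, as the empty base is good via $C=B^\star$. For the inductive step, let $D$ be the popped base and $B'$ the good base in the queue. If $D\neq B'$, then $B'$ survives unless a freshly pushed child $D_j$ of $D$ evicts it, which by the \texttt{hash\_priority\_queue} rule forces $\prod D_j=\prod B'$ and $\cost^\alpha_S(D_j)<\cost^\alpha_S(B')$; by the transfer lemma $D_j$ is good, and it clears the pruning test on line~\texttt{/*6*/} because $\cost^\alpha_S(D_j)\leq\cost_S(D_jC)\leq\cost_S(\texttt{bestB})$ by admissibility. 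If $D=B'$: when $C$ is empty, $B'$ itself is optimal and (being in the queue) was pushed, so the bound update attached to every \texttt{push} — together with monotonicity of \texttt{bestB} and the fact that the empty base can be optimal only when \texttt{bestB} is initialized to it — has already made \texttt{bestB} optimal; when $C$ is nonempty its first radix $c_0$ extends $B'$, so $B'c_0$ is generated, clears the pruning test as above, and is pushed, and either it or the same-product base blocking it is good by the transfer lemma. Since no two children of a node share a product, the good witness installed in this iteration is not displaced later in it, and the invariant persists.

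Termination holds because every base has a unique parent in the search tree, hence is pushed — only when its parent is popped — at most once, hence popped at most once; as $\Base(S)$ is finite (Lemma~\ref{zeta}) the loop halts. (Re-insertion of an already-popped base, the situation that Property~2 addresses, may occur but is merely redundant work and never happens for the same base, so this counting is unaffected.) On exit, the invariant leaves two cases: if \texttt{bestB} is optimal we are done; otherwise a good $B'$ remains in the queue, so the loop exited because $\cost^\alpha_S$ of the queue minimum is $\geq\cost_S(\texttt{bestB})$, whence $\cost_S(\texttt{bestB})\leq\cost^\alpha_S(\text{min})\leq\cost^\alpha_S(B')\leq\cost_S(B'C)=\cost_S(B^\star)\leq\cost_S(\texttt{bestB})$, forcing $\cost_S(\texttt{bestB})=\cost_S(B^\star)$. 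If the search is restricted to prime bases, take $B^\star$ prime by Lemma~\ref{lem:primes}; all its prefixes are prime bases and the argument is unchanged.

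I expect the main obstacle to be the inductive step of the invariant: one must enumerate every way the designated good base can leave the queue — surviving, being evicted by a same-product child, or being expanded — and show that in each case goodness is inherited, via the transfer lemma, by whatever takes its place, while checking that line~\texttt{/*6*/} never prunes the base that is needed and that the degenerate case of an already-optimal good base is absorbed by the bound updates. The digit-refinement identity behind claim~(1) is conceptually simple but carries the whole argument there and must be stated and verified precisely.
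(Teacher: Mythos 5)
Your proposal is correct in substance, and on both halves it takes a route that differs from the paper's. For part (1), the paper argues by complete induction on the length of the extension $C$, re-deriving the cost of $B_iC$ one appended radix at a time via the div/mod identities; you instead prove a single digit-refinement identity, $\sumDigits(v_{(BC)})=\bigl(\sumDigits(v_{(B)})-\lfloor v/\prod B\rfloor\bigr)+\sumDigits\bigl((\lfloor v/\prod B\rfloor)_{(C)}\bigr)$, which packages the whole extension into $\cost_S(BC)=\partial\cost_S(B)+\Phi(\prod B,C,S)$ with $\Phi$ depending on $B$ only through its product. This is the same underlying fact the paper uses in its base case (digits below $|B|$ are unchanged, the upper digits represent $\lfloor v/\prod B\rfloor$ in $C$), but your non-inductive formulation is cleaner and makes the ``depends only on $\prod B$'' point explicit. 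For part (2), the paper first proves (by induction on the product) that every base popped from the hashed queue is $\cost^\alpha_S$-minimal in its product class (this is its Property~2), and then derives optimality by contradiction, taking the nearest ancestor of a hypothetically better optimal base whose class representative was extracted; you instead run a forward loop invariant (``\texttt{bestB} is optimal, or the queue contains a base extendable to an optimal one'') together with a transfer lemma that is exactly Property~1 applied to same-product bases. Both arguments are sound; yours is more explicit about the queue mechanics (eviction by same-product children, survival under the pruning test, distinctness of sibling products), which the paper leaves implicit, at the price of a longer case analysis.

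One soft spot: in the subcase where the popped base is itself optimal (empty $C$), your justification relies on the bound update at every push, which is fine for bases entering the queue at line 7, but your parenthetical claim that ``the empty base can be optimal only when \texttt{bestB} is initialized to it'' is false as stated --- \texttt{bestB} is initialized to the binary base $\tuple{2,\ldots,2}$, and the root enters the queue with no bound update. For $\sumDigits$ the intended conclusion still holds, since the binary digit sum of $v$ never exceeds $v$, so the initial \texttt{bestB} is already optimal whenever the empty base is; but for an arbitrary cost function satisfying Property~1 this needs to be said (or the optimum assumed to be attained by a nonempty base). The paper's own proof silently makes the same assumption by writing the putative better optimum as $F.\tuple{b}$, so this is a shared corner case rather than a defect specific to your argument, but you should repair the sentence rather than lean on it.
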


We conjecture that the other cost functions do not satisfy
\textbf{Property 1}, and hence cannot guarantee that the
\texttt{hashB\&B} algorithm always finds an optimal base.
However, in our extensive experimentation, all bases found
(when searching for an optimal prime base) are indeed optimal.

A direct implication of the above improvements is that we can now
provide a tighter bound on the complexity of the search algorithm.
Let us denote the number of different integers in $S$ by $s$ and
$m=max(S)$.
First note that in the worst case the hashed priority queue will
contain $m$ elements (one for each possible value of a base product,
which is never more than $m$).  
Assuming that we use a Fibonacci Heap, we have a $O(\log(m))$ cost
(amortized) per \texttt{popMin()} operation and in total a $O(m*
\log(m))$ cost for popping elements off the queue during the search
for an optimal base.

Now focus on the cost of operations performed when extracting a
base $B$ from the queue. Denoting
$\prod B=q$,
$B$ has at most $m/q$
children (integers which extend it). 
For each child we have to calculate $cost_S$ which incurs a cost of
$O(s)$ 
and possibly to insert it to the queue.
Pushing an element onto a hashed priority queue (in all three cases)
is a constant time operation (amortized), and hence the total cost for
dealing with a child is $O(s)$.

Finally, consider the total number of children created during
the search which corresponds to the following sum:
\vspace{-2mm}
\[  
  O(\sum_{q=1}^{m}m/q ) = O(m\sum_{q=1}^{m}1/q)=O(m*\log(m)) 
\] 
So, in total we get $O(m*\log(m))+O(m*\log(m)*s)\leq O(m^2 *
\log(m))$. 
When we restrict the extenders to be prime numbers then we can further
improve this bound to $O(m^2 * \log(\log(m)))$ by reasoning about the
density of the primes. 
A proof can be found in the appendix.
\section{Experiments}
\label{sec:exp}

Experiments are performed using an extension to \minisatp\
\cite{EenS06} where the only change to the tool is to plug in our
optimal base algorithm.
The reader is invited to experiment with the implementation via its
web interface.\footnote{
  \url{http://aprove.informatik.rwth-aachen.de/forms/unified_form_PBB.asp}}
All experiments are performed on a Quad-Opteron 848 at 2.2 GHz, 16 GB
RAM, running Linux.

Our benchmark suite originates from 1945 Pseudo-Boolean Evaluation
\cite{Manquinho06}
instances from the years 2006--2009 containing a total of 
74,442,661 individual Pseudo-Boolean constraints. After 
normalizing and removing constraints with $\set{0,1}$ coefficients we
are left with 115,891 different optimal base problems where the
maximal coefficient is $2^{31}-1$.
We then focus on 734
PB instances where at least one optimal base
problem from the instance yields a base with an element that is
non-prime or greater than 17.
When solving PB instances, in all experiments, a 30 minute timeout is
imposed as in the Pseudo-Boolean competitions. When solving an optimal
base problem, a 10 minute timeout is applied.
\paragraph{Experiment 1 (Impact of optimal bases):}

The first experiment illustrates the advantage in searching for an
optimal base for Pseudo-Boolean solving. We compare sizes and
solving times when encoding w.r.t. the binary base vs.\ w.r.t. an optimal base
(using the \texttt{hashB\&B} algorithm with the $\numComparators$ cost
function).
Encoding w.r.t. the binary base, we solve 435 PB instances (within the
time limit) with an average time of 146 seconds and average CNF size
of 1.2 million clauses.
Using an optimal base, we solve 445 instances with an average time of
108 seconds, and average CNF size of 0.8 million clauses.

\paragraph{Experiment 2 (Base search time):}

Here we focus on the search time for an optimal base in six
configurations using the $\sumCarry$ cost function.
Configurations \texttt{M17}, \texttt{dfsHP17}, and \texttt{B\&B17},
are respectively, the \minisatp\ implementation, our \texttt{dfsHP}
and our \texttt{B\&B} algorithms, all three searching for an optimal
base from $\Base_p^{17}$, i.e., with prime elements up to 17.
Configurations \texttt{hashB\&B1,000,000}, \texttt{hashB\&B10,000},
and \texttt{hashB\&B17} are our \texttt{hashB\&B} algorithm 
searching for a base from $\Base_p^\ell$ with bounds of
$\ell = $ 1,000,000, $\ell =$ 10,000, and $\ell =$ 17, respectively.

Results are summarized in Fig.~\ref{fig:results1} which is obtained
as follows. We cluster the optimal base problems according to the
values $\lceil \log_{1.9745} M \rceil$ where $M$ is the maximal
coefficient in a problem. Then, for each cluster we take the average
runtime for the problems in the cluster. The value $1.9745$ is chosen
to minimize the standard deviation from the averages (over all
clusters). These are the points on the graphs.
Configuration \texttt{M17} times out on 28
problems.  For \texttt{dfsHP17}, the maximal search time is 200
seconds.  Configuration \texttt{B\&B17} times out for 1 problem.  The
\texttt{hashB\&B} configurations have maximal runtimes of 350 seconds,
14 seconds and 0.16 seconds, respectively for the bounds 1,000,000,
10,000 and 17.
\begin{figure}
  \centering
  \vspace{-3ex}
  \includegraphics[scale=0.152]{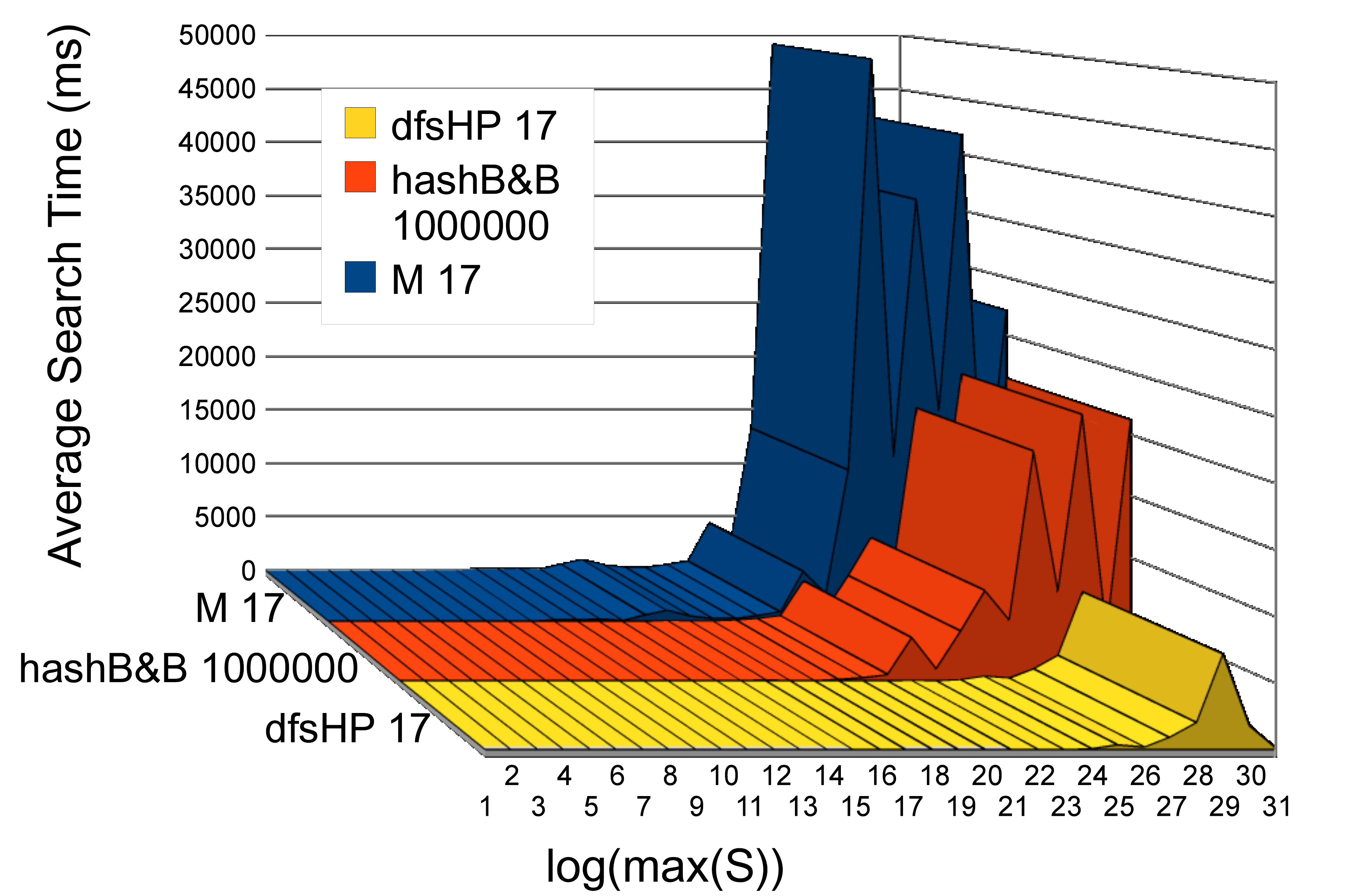}
  \includegraphics[scale=0.152]{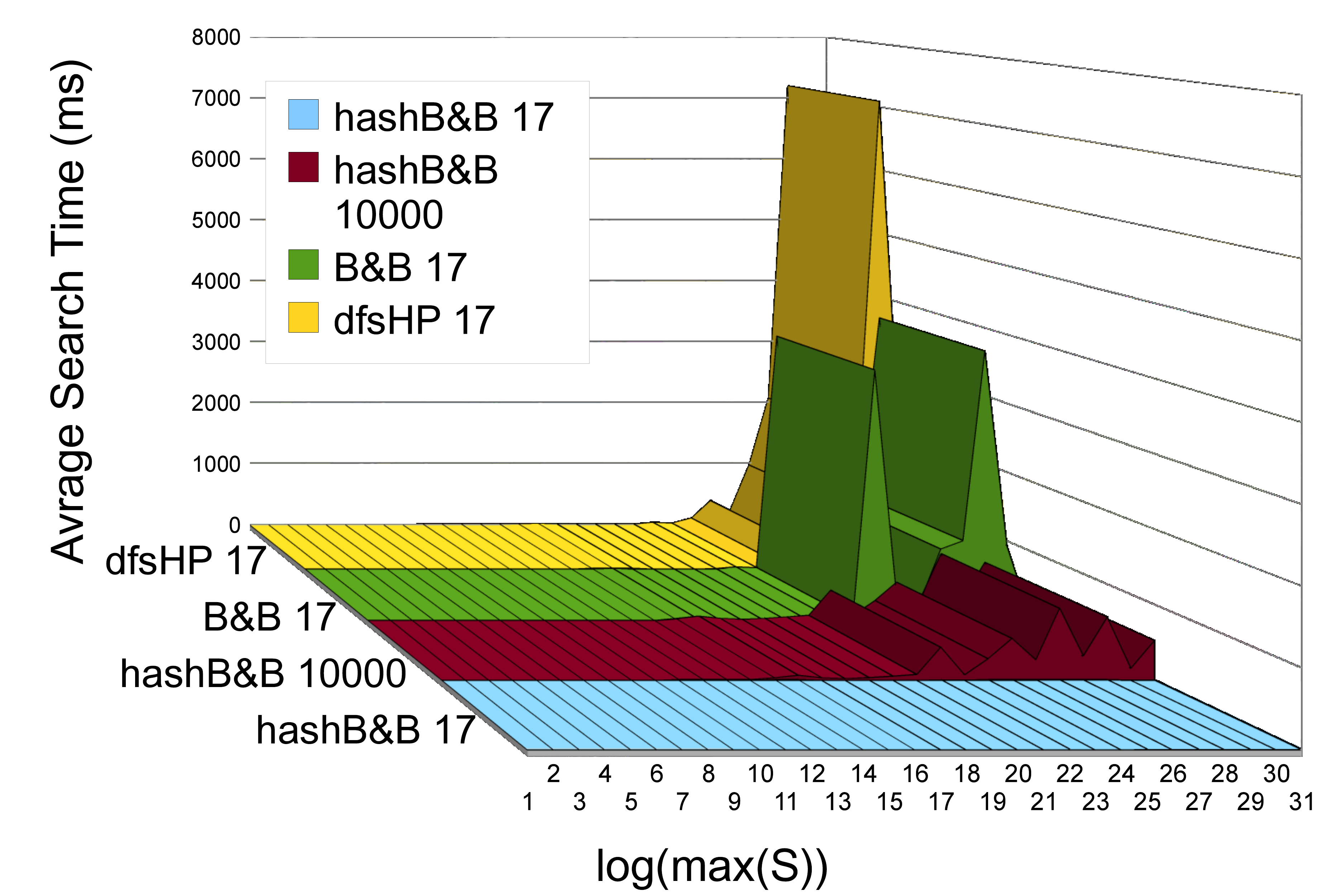}
  \vspace{-3ex}
  \caption{Experiment 2: the 3 slowest configurations (left)
    (from back to front)
    \texttt{M17}(blue), \texttt{hashB\&B1,000,000}(orange) and
    \texttt{dfsHP17}(yellow).  
    The 4 fastest configurations (right) (from back to front)
    \texttt{dfsHP17}(yellow), \texttt{B\&B17}(green),
    \texttt{hashB\&B10,000}(brown) and \texttt{hashB\&B17}(azure).
    Note the change of scale for the $y$-axis with 50k ms on the left
    and 8k ms on the right. Configuration \texttt{dfsHP17} (yellow) is
    lowest on left and highest on right, setting the reference point
    to compare the two graphs. \vspace{-5mm}}

\label{fig:results1}
\vspace{-1ex}
\end{figure}

Fig.~\ref{fig:results1} shows  that: (left)  even with primes
up to 1,000,000, \texttt{hashB\&B} is faster than the algorithm from
\minisatp\ with the limit of 17; and
(right)  even with primes
up to 10,000, the search time using \texttt{hashB\&B} is essentially
negligible.
\paragraph{Experiment 3 (Impact on PB solving):}
Fig.~\ref{fig:results2} illustrates the influence of improved base
search on SAT solving for PB Constraints. Both graphs
depict the number of instances solved (the $x$-axis) within a
time limit (the $y$-axis). On the left, total solving time
(with base search), and on the right, SAT solving time only.

\begin{figure}
  \centering
  \vspace{-5ex}
  \includegraphics[scale=0.063]{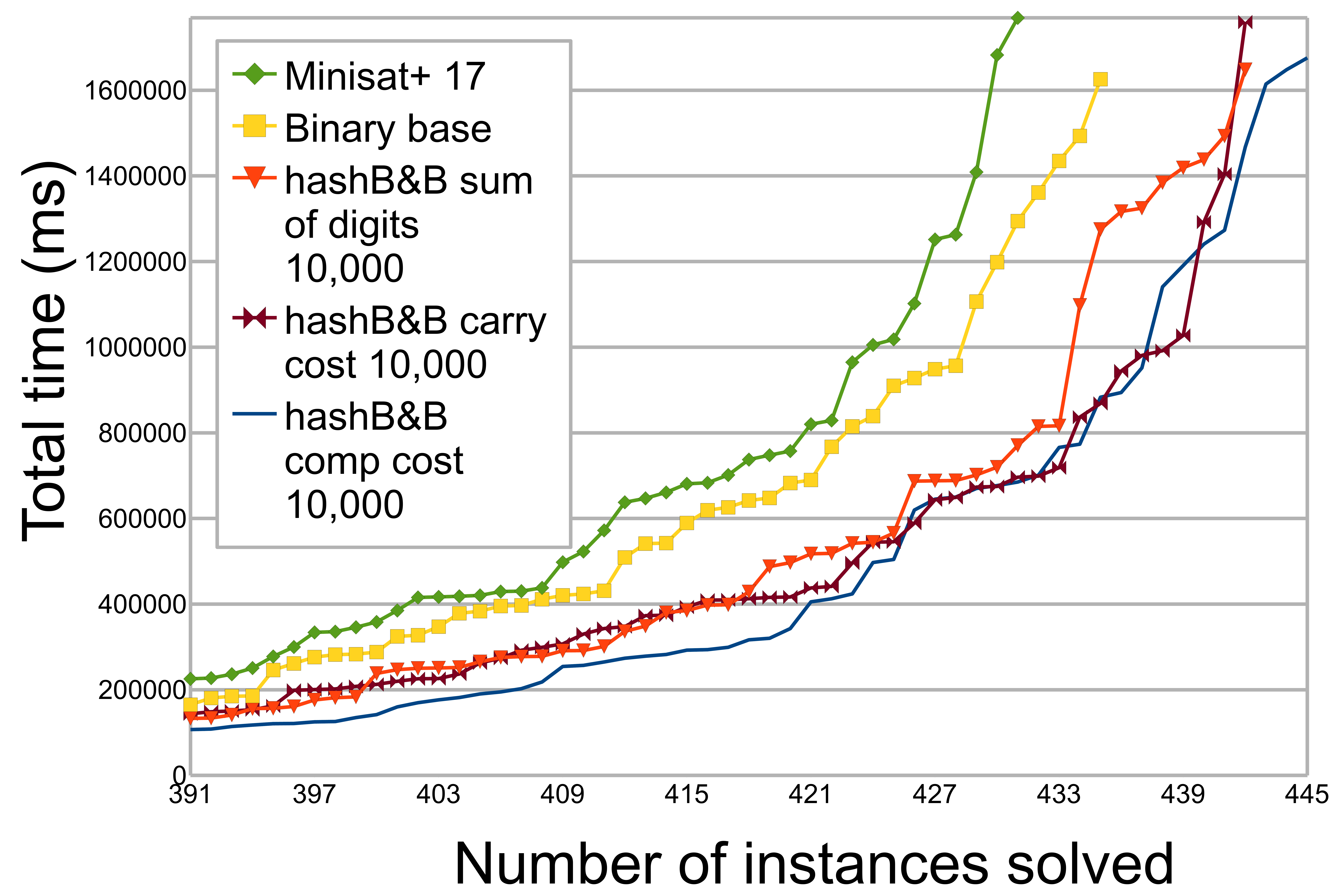}
  \includegraphics[scale=0.063]{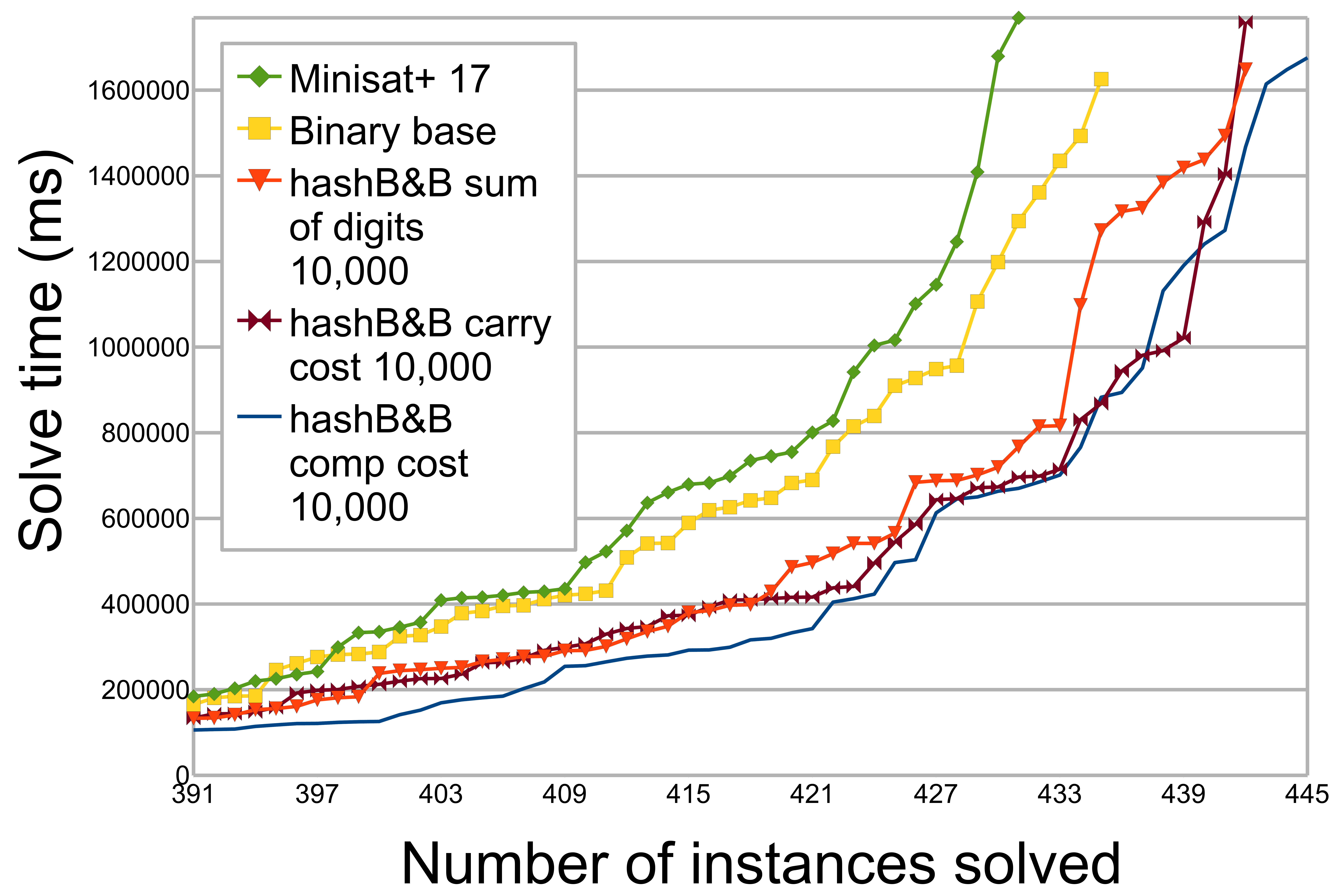}
  \vspace{-3ex}
  \caption{Experiment 3: total times (left), solving times (right) %
 }
 \vspace{-5ex}
\label{fig:results2}
\end{figure}
\noindent Both graphs consider the $734$ instances of interest and compare
SAT solving times with bases found using five configurations.  %
The first is \minisatp with
configuration \texttt{M17}, the second is with respect to the binary
base, the third to fifth are \texttt{hashB\&B}
searching for bases from $\Base_p^{\text{10,000}}(S)$
with cost functions: $\sumDigits$, $\sumCarry$, and
$\numComparators$, respectively.
The average total/solve run-times (in sec) are 150/140, 146/146,
122/121, 116/115 and 108/107 (left to right). The total number of
instances solved are 431, 435, 442, 442 and 445 (left to right).  
The average CNF sizes (in millions of clauses) for %
the entire
test set/%
the set where all algorithms solved/%
the set where
no algorithm solved are 7.7/1.0/18,
9.5/1.2/23, 8.4/1.1/20, 7.2/0.8/17
and 7.2/0.8/17 (left to right).  

The graphs of Fig.~\ref{fig:results2} and average solving times
clearly show: \textbf{(1)} SAT solving time dominates base
finding time, \textbf{(2)} \minisatp\ is outperformed by the trivial
binary base,  \textbf{(3)}
total solving times with our algorithms are faster than with the binary
base, and \textbf{(4)} the most specific
cost function (comparator cost) outperforms the other cost functions
both in solving time and size.  Finally, note that sum of
digits with its nice mathematical properties, simplicity, and
application independence solves as many instances as cost carry.

\paragraph{Experiment 4 (Impact of high prime factors):}

\begin{wrapfigure}{r}{73mm}\vspace{0mm}
  \includegraphics[scale=0.16]{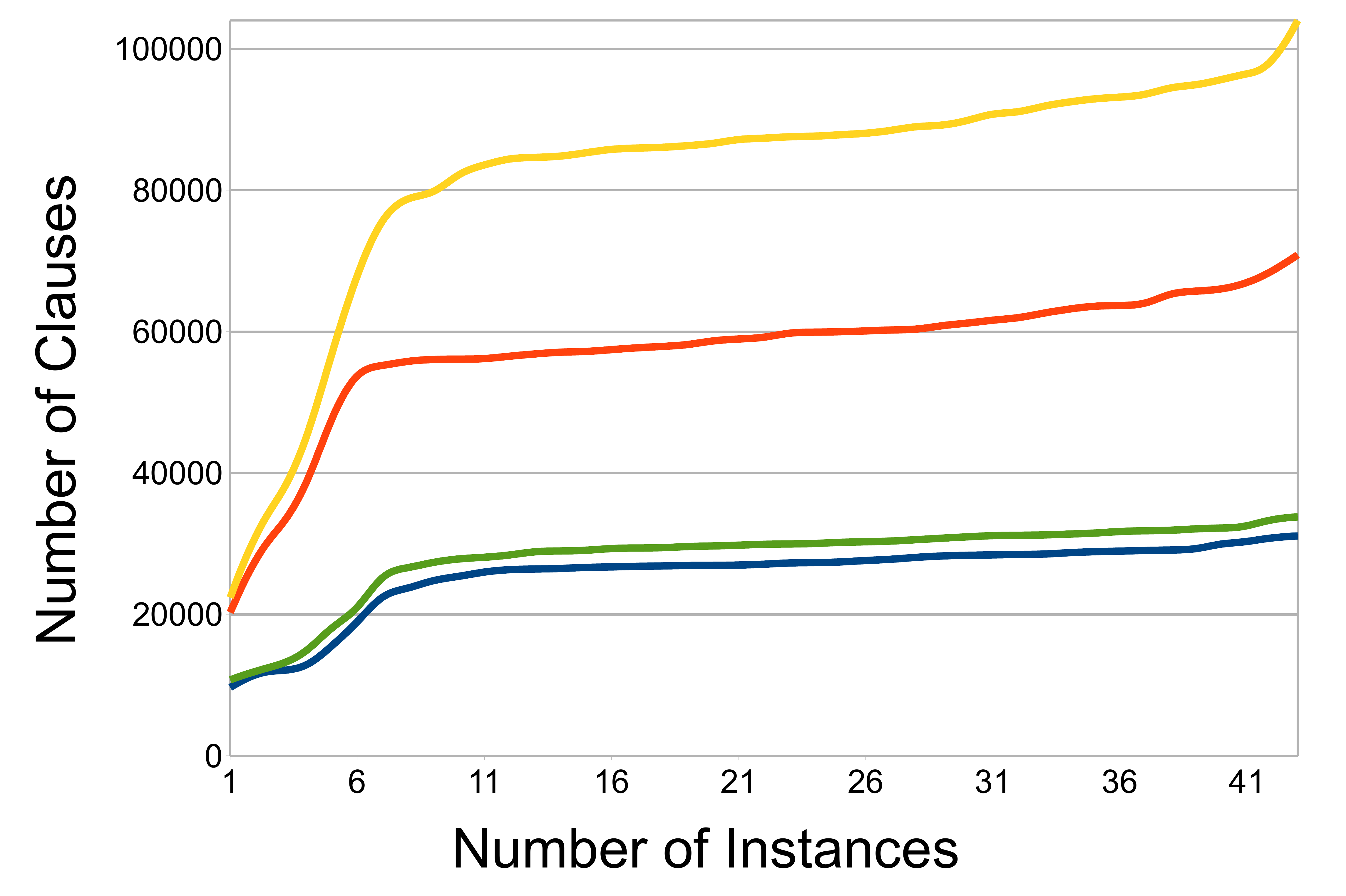}
  \vspace{-4ex}
  \caption{\small
    Experiment 4:  Number ($x$-axis) of instances encoded within
    number of clauses ($y$-axis) on 4 configurations. From top line
    to bottom: (yellow) $\ell=17$, $i=5$, (red) $\ell=17$, $i=2$,
    (green) $\ell=31$, $i=5$, and (blue) $\ell\in\{17,31\}$, $i=0$.}
\label{fig:resultsGenCactus}
\vspace{-4ex}
\end{wrapfigure}
This experiment is about the effects of restricting the maximal prime
value in a base (i.e. the value $\ell=17$ of \minisatp).
An analysis of the our benchmark suite indicates that coefficients
with small prime factors are overrepresented.
To introduce instances where coefficients have larger prime factors we
select 43 instances from the suite and multiply their coefficients to
introduce the prime factor 31 raised to the power $i \in
\{0,\ldots,5\}$. We also introduce a slack
variable 
to avoid gcd-based simplification.
This gives us a collection of 258 new instances.
We used the \texttt{B\&B} algorithm
with the $\sumCarry$ cost function applying the limit $\ell=17$ (as in
\minisatp) and $\ell=31$.
Results indicate that for $\ell=31$, both CNF size
and
SAT-solving time are independent of the factor $31^i$ introduced for
$i>0$. However, for $\ell=17$, both measures increase as the power
$i$ increases. Results on CNF sizes are reported in Fig.\
\ref{fig:resultsGenCactus} which plots for 4 different settings the
number of instances encoded ($x$-axis) within a CNF with that many
clauses ($y$-axis).

\section{Related Work}\label{relwork}

Recent work \cite{BailleuxBR09} encodes Pseudo-Boolean constraints via
``totalizers'' similar to sorting networks, determined by the
representation of the coefficients in an underlying base.  Here the
authors choose the standard base 2 representation of numbers. It is
straightforward to generalize their approach for an arbitrary mixed
base, and our algorithm is directly applicable.
In~\cite{Sidorov99} the author considers the $\sumDigits$ cost
function and analyzes the size of representing the natural numbers up
to $n$ with (a particular class of) mixed radix bases.  
Our Lemma~\ref{lem:primes} may lead to a contribution in that context.

\section{Conclusion}
\label{sec:conc}

It has been recognized now for some years that decomposing the
coefficients in a Pseudo-Boolean constraint with respect to a mixed
radix base can lead to smaller SAT encodings. However, it remained an
open problem to determine if it is feasible to find such an optimal
base for constraints with large coefficients. In lack of a better
solution, the implementation in the \minisatp\ tool applies a brute
force search considering prime base elements less than 17.

To close this open problem, we first formalize the optimal base
problem and then significantly improve the search algorithm currently
applied in \minisatp.
Our algorithm scales and easily finds optimal bases
with elements up to 1,000,000.
We also illustrate that, for the
measure of optimality applied in \minisatp, one must consider also
non-prime base elements. However, choosing the more simple
$\sumDigits$ measure, it is sufficient to restrict the search to prime
bases.

With the implementation of our search algorithm it is possible, for
the first time, to study the influence of basing SAT encodings on
optimal bases.  We show that for a wide range of benchmarks,
\minisatp\ does actually find an optimal base consisting of elements
less than 17.  We also show that many Pseudo-Boolean instances have
optimal bases with larger elements and that this does influence the
subsequent CNF sizes and SAT solving times, especially when
coefficients contain larger prime factors.

\vspace{-2ex}

\subsubsection*{Acknowledgement}
We thank Daniel Berend and Carmel Domshlak for useful discussions.

\vspace{-2ex}

\newcommand{\noopsort}[1]{}

\appendix

\section{Appendix: Proofs}\label{apdx}

\newcommand{\apdxtheoremlike}[1]{\par\medskip\penalty-250{\bfseries\scshape\noindent#1}\slshape}

\subsection{Proving Lemma~\ref{l1}}

\noindent\apdxtheoremlike{Lemma~\ref{l1}.}
Let $S \in \intMultiSet$ and consider the $\sumDigits$ cost
function. Then, $S$ has an optimal base in $\Base(S)$.

\begin{proof}
  Let $S\in\intMultiSet$ and let $B\in\Base$ be an optimal base for
  $S$ with $\prod B > max(S)$. Let $B'$ be the base obtained by
  removing the last element from $B$.  We show that $msd(S_{(B)}) =
  \tuple{0, \ldots, 0}$ and that $\sumDigits(S_{(B)}) =
  \sumDigits(S_{(B')})$. The claim then follows.
  Assume falsely that for $v \in S$,\linebreak $msd(v_{(B)})>0$.
  Then we get the contradiction $v =
  \prod_{i=0}^{|B|}v_{(B)}(i) \cdot weights(B)(i) \geq msd(v_{(B)})
  \cdot \prod B > msd(v_{(B)}) \cdot max(S) \geq v$.
  From the definition of $\sumDigits$ and the above
  contradiction we deduce that $\sumDigits(B)= \sumDigits(B')$.
\end{proof}

It is straightforward to generalize Lemma~\ref{l1} for the other cost
functions considered in this paper. 

\subsection{Proving Lemma~\ref{lem:primes}} 

\begin{proposition}[Unique base representation] \label{pNib} %
  Let $B=\tuple{r_0,\ldots,r_{k-1}}$ be a base with
  $weights(B)=\tuple{w_0,\ldots,w_k}$ and let $v\in\mathbb{N}$.  
  Then, the unique representation of $v$ in $B$ is obtained as
  $v_{(B)}=\tuple{d_0,\ldots,d_k}$ such that:
(1) $d_0 = mod(v,r_0)$,             
(2) $d_i = mod( div(v,w_i) , r_i)$ \ for $0<i<k$, and
(3) $d_k = div(v,w_k)$.
\end{proposition}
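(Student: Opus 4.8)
The plan is to prove existence and uniqueness of the claimed representation separately, both by elementary arithmetic, using only the defining relations $w_0 = 1$ and $w_{i+1} = w_i r_i$ from Section~\ref{section:obp}. For existence, I would \emph{define} $d_0,\ldots,d_k$ by the formulas (1)--(3) and verify the two properties required of a representation: that $0 \le d_i < r_i$ for each $i<k$, which is immediate since each such $d_i$ is a value of $mod(\cdot,r_i)$, while $d_k = div(v,w_k) \ge 0$; and that $\sum_{i=0}^{k} d_i w_i = v$. The latter carries the real content, and the key tool is the ``associativity'' of integer division, $div(div(n,a),b) = div(n,ab)$ for positive integers $a,b$, applied with $a = w_i$, $b = r_i$, so that $div(div(v,w_i),r_i) = div(v,w_{i+1})$.

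To evaluate the sum I would rewrite each term via the identity $mod(n,r)\cdot w = nw - div(n,r)\cdot rw$. For $0<i<k$ this yields $d_i w_i = div(v,w_i)\,w_i - div(v,w_{i+1})\,w_{i+1}$; the term $i=0$ fits the same pattern because $w_0=1$ and $w_1=r_0$, giving $d_0 w_0 = v - div(v,w_1)\,w_1$; and $d_k w_k = div(v,w_k)\,w_k$ is the final term. Summing over $i$ the expression telescopes to $div(v,w_0)\,w_0 - div(v,w_k)\,w_k + div(v,w_k)\,w_k = v$, establishing that $\tuple{d_0,\ldots,d_k}$ indeed represents $v$ in $B$.

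For uniqueness, suppose $\tuple{e_0,\ldots,e_k}$ is any digit sequence with $0 \le e_i < r_i$ for $i<k$, $e_k \ge 0$, and $\sum_i e_i w_i = v$. Fix $i<k$. Since $w_i \mid w_j$ for every $j \ge i$ and $r_i \mid (w_j/w_i)$ for every $j > i$, dividing $v = \sum_j e_j w_j$ by $w_i$ shows $div(v,w_i) \equiv e_i \pmod{r_i}$; as $0 \le e_i < r_i$, this forces $e_i = mod(div(v,w_i),r_i) = d_i$ (the case $i=0$ gives $e_0 = mod(v,r_0) = d_0$). Finally, from the telescoping bound $\sum_{j<k} e_j w_j \le \sum_{j<k}(r_j-1)w_j = w_k - 1 < w_k$ we obtain $div(v,w_k) = e_k$, hence $e_k = d_k$. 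Thus the representation is unique and coincides with the one given by (1)--(3).

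The only points that need care are the two elementary facts invoked --- the division identity $div(div(n,a),b) = div(n,ab)$ and the telescoping identity $\sum_{j<k}(r_j-1)w_j = w_k - 1$ --- together with the degenerate cases $k=0$ (the empty base, where the representation is the single digit $d_0 = div(v,w_0) = v$) and $k=1$, which I would check explicitly so that the index ranges ``$0<i<k$'' in (1)--(3) are handled correctly rather than vacuously. None of these is a genuine obstacle; the argument is a careful transcription of the standard positional-numeral proof to the mixed-radix setting.
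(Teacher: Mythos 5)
Your proof is correct, but it is organized differently from the paper's. The paper proves Proposition~\ref{pNib} by induction on the length of the base: it collapses the two least-significant radices $b_0,b_1$ of $B$ into their product to obtain a shorter base $B'$, relates the digits of $v$ in $B$ and $B'$ via the composition identities $div(a,b\cdot c)=div(div(a,b),c)$ and $mod(a,b\cdot c)=mod(a,b)+mod(div(a,b),c)\cdot c$ (Propositions~\ref{pDiv} and~\ref{pDivMod}), and then invokes the induction hypothesis to get $\sum_i d_i w_i = v$. You instead give a direct, non-inductive telescoping computation, writing $d_i w_i = div(v,w_i)\,w_i - div(v,w_{i+1})\,w_{i+1}$ via the same core identity $div(div(v,w_i),r_i)=div(v,w_{i+1})$ and summing; this is shorter and avoids the second identity entirely. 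You also supply a genuine uniqueness argument (any valid digit string is forced, digit by digit, to agree with (1)--(3)), which the paper asserts in the statement but never actually verifies, so on that point your write-up is more complete. One small care point in your uniqueness step: the congruence $div(v,w_i)\equiv e_i \pmod{r_i}$ needs the bound $\sum_{j<i}e_j w_j < w_i$, i.e.\ the telescoping identity with $k$ replaced by $i$, not only for the top digit where you invoke it explicitly; since you list that identity among the facts to be checked, this is a presentational gap only, not a mathematical one.
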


\begin{proof}(sketch)
  We have to show that for $i<k$, $0\leq d_i<r_i$ and that 
  $v=\sum_{i=0}^{k} d_i\cdot w_i$. The first property follows directly
  from the construction. The second is elaborated on below.
\end{proof}

\begin{proposition}[Base factoring]
\label{pPrimesOnly}
Let $v \in \mathbb{N}$ and let $B_1 = \tuple{r_0,\ldots,r_{k-1}}$ and
$B_2 = \tuple{r'_0,\ldots,r'_{k-2}}$ be bases which are identical
except that at some position $0\leq p<k-1$, two consecutive base
elements in $B_1$ are replaced in $B_2$ by their
multiplication. Formally:
 $r'_i=r_i$ for $0\leq i < p$, 
 $r'_{p}=r_p \cdot r_{p+1}$, and
 $r'_i=r_{i+1}$ for $p<i<k-1$.
Then, the sum of the digits in $v_{(B_2)}$ is greater or equal to the
sum of the digits in $v_{(B_1)}$.
\end{proposition}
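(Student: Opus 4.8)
The plan is to reduce the inequality to a single one-digit comparison, using the explicit digit formulas of Proposition~\ref{pNib} rather than reasoning about the recurrence $w_{i+1}=w_ir_i$. Write $weights(B_1)=\tuple{w_0,\ldots,w_k}$ and let $u=\mathrm{div}(v,w_p)$ be the ``high part'' of $v$ above the merged position. First I would record the structural facts: $B_1$ and $B_2$ share the prefix $\tuple{r_0,\ldots,r_{p-1}}$ and the suffix $\tuple{r_{p+2},\ldots,r_{k-1}}$; their weight sequences agree up to index $p$; and $weights(B_2)$ from index $p+1$ onward is obtained from $weights(B_1)$ by deleting the single entry $w_{p+1}$, since $w_p\cdot(r_pr_{p+1})=w_{p+2}$. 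From Proposition~\ref{pNib} it then follows that the low digits $d_0,\ldots,d_{p-1}$ coincide in $v_{(B_1)}$ and $v_{(B_2)}$ (they depend only on $v \bmod w_p$), and that the digits strictly above the merged position --- namely $d_{p+2},\ldots,d_k$ in $B_1$ and $d'_{p+1},\ldots,d'_{k-1}$ in $B_2$ --- are also identical, because each is, by Proposition~\ref{pNib}, the representation of the common number $\mathrm{div}(u,\,r_pr_{p+1})$ in the common tail base $\tuple{r_{p+2},\ldots,r_{k-1}}$. Appealing to the closed forms of Proposition~\ref{pNib} makes this argument uniform, covering the boundary case $p=0$ and the case where the merge touches the most significant digit.

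After this bookkeeping, the only difference between the two digit vectors is that the pair $d_p,d_{p+1}$ of $v_{(B_1)}$ is replaced by the single digit $d'_p$ of $v_{(B_2)}$, so it suffices to prove $d'_p \ge d_p + d_{p+1}$. By Proposition~\ref{pNib} we have $d_p = \mathrm{mod}(u,r_p)$, $d_{p+1} = \mathrm{mod}(\mathrm{div}(u,r_p),r_{p+1})$, and $d'_p = \mathrm{mod}(u,\,r_pr_{p+1})$. Writing $\mathrm{mod}(u,\,r_pr_{p+1}) = a\,r_p + b$ with $0\le b<r_p$ and $0\le a<r_{p+1}$, one checks directly that $d'_p = a\,r_p + b$, $d_p = b$, and $d_{p+1}=a$; hence $d'_p-(d_p+d_{p+1}) = a\,r_p + b - a - b = a(r_p-1)\ge 0$ since $r_p\ge 2$. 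Summing the per-position contributions yields that the sum of the digits in $v_{(B_2)}$ is at least the sum of the digits in $v_{(B_1)}$, as claimed.

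The only real obstacle is the index bookkeeping in the first step; the arithmetic in the second step is elementary, and I would keep the first step short precisely by quoting Proposition~\ref{pNib}'s formulas, which trivialize the claim that the high digits are unchanged. Finally, I note that this proposition is exactly what Lemma~\ref{lem:primes} needs: any finite base is turned into a prime base by repeatedly replacing a composite radix by two consecutive radices whose product it is, and applying the inequality above to every $v\in S$ and summing shows that each such refinement step does not increase $\sumDigits(S_{(B)})$.
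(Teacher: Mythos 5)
Your proposal is correct and takes essentially the same route as the paper's proof: all digits outside the merged position coincide (after the index shift), and at position $p$ the merged digit satisfies $d'_p = d_p + r_p\, d_{p+1} \geq d_p + d_{p+1}$ since $r_p \geq 2$. The only cosmetic difference is that you obtain this single-position identity by direct div/mod arithmetic (essentially the paper's Proposition~\ref{pDivMod}), whereas the paper's appendix derives it by cancelling the two weighted-sum decompositions of $v$ against each other.
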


\begin{proof}(sketch)
   We first observe that 
	$v_{(B_2)}(i)=v_{(B_1)}(i)$  for $0 \leq i < p$   and 
        $v_{(B_2)}(i)=v_{(B_1)}(i+1)$ for $p< i<k-1$.
   So, it remains to show that 
	$ \sum_{i=0}^{k-1}v_{(B_2)}(i) - \sum_{i=0}^{k}v_{(B_1)}(i)= 
                          v_{(B_2)}(p) -v_{(B_1)}(p)-v_{(B_1)}(p+1) \geq 0$.
   We elaborate on this below.
\end{proof}

\noindent\apdxtheoremlike{Lemma~\ref{lem:primes}.}
   Let $S \in \intMultiSet$ and consider the $\sumDigits$ cost
   function. Then, $S$ has an optimal base in $\Base_p(S)$.

\begin{proof}
  Let $B_2$ be a base with $k-1$ elements of the form
\[
    B_2 = \tuple{r_1,\ldots,r_{p-1},(r_p\cdot r_{p+1}),
                  r_{p+2},\ldots,r_{k-1}}
\]
  where the element $(r_p\cdot r_{p+1})$ at position $p$
  is non-prime and $r_p, r_{p+1} > 1$.
  So, taking $B_1= \tuple{r_1,\ldots,r_{p-1},r_p,r_{p+1},
                  r_{p+2},\ldots,r_{k-1}}$, we are in the  setting of
  Proposition~\ref{pPrimesOnly}.
  The result follows:
  \begin{eqnarray*}
  \sumDigits(S_{(B_2)}) - \sumDigits(S_{(B_1)}) &=& 
     \sum_{v \in S}  \sum_{i=0}^{k-1}v_{(B_2)}(i)- \sum_{v \in S}
     \sum_{i=0}^{k}v_{(B_1)}(i)\\ 
  &=& \sum_{v \in S} ( v_{(B_2)}(p) - v_{(B_1)}(p)- v_{(B_1)}(p+1) )\\
  &\geq& 0    
  \end{eqnarray*}
\end{proof}

\subsection{Proving Lemma~\ref{lem:h}}
Consider the
notation of Definition~\ref{cost2}.  Let $S\in\intMultiSet$,
$B\in\Base$ with $|B|=k$ and $S_{(B)}=(a_{ij})$.
Denote the sequences $\bar s(B)=\tuple{s_0^B,s_1^B,\ldots,s_k^B}$
(sums) and $\bar c(B)= \tuple{c_0^B,c_1^B,\ldots,c_k^B}$ (carries)
defined by:
$s_j^B=\sum_{i=1}^n a_{ij}$ for $0\leq j\leq k$, $c_0^B=0$, and 
$c_{j+1}^B = (s_j^B+c_j^B) \mbox{\rm ~div~} B(j)$ for $0\leq j\leq k$.
We denote also $inputs_S^B(j)=s_{j-1}^B+c_{j-1}^B$.

\begin{proposition}
\label{pSubBase1} 
Let $S\in\intMultiSet$ and $B$,$B'$ bases such that $B'\succ B$.
Then, for $1 \leq j \leq |B|$, $inputs_S^B(j)=inputs_S^{B'}(j)$.
\end{proposition}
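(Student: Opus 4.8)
The statement says that if $B'$ extends $B$ (i.e.\ $B$ is a prefix of $B'$), then for every digit position $j$ with $1 \le j \le |B|$, the number of inputs to the $j$-th sorting network, $inputs_S^B(j) = s_{j-1}^B + c_{j-1}^B$, is unchanged when we pass from $B$ to $B'$. The plan is a straightforward induction on $j$, exploiting the fact that the first $|B|$ radices of $B'$ coincide with those of $B$, and that the recurrences defining $\bar s$ and $\bar c$ only look ``backwards'' (position $j$ depends on positions $<j$ and on the radices $B(0),\ldots,B(j-1)$, all of which lie in the common prefix).

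First I would record the two base facts. Since $B$ is a prefix of $B'$, we have $B(i) = B'(i)$ for all $0 \le i < |B|$; and since the digits of $v$ in position $i$ for $i < |B|$ are computed (by Proposition~\ref{pNib}) purely from $v$ and the radices $B(0),\ldots,B(i)$, we get $a_{ij} = a'_{ij}$ for every row $i$ and every column $j < |B|$, hence $s_j^B = s_j^{B'}$ for all $0 \le j < |B|$. Next I would prove by induction on $j$, for $0 \le j \le |B|$, the statement $P(j)$: $s_{j-1}^B + c_{j-1}^B = s_{j-1}^{B'} + c_{j-1}^{B'}$ (reading position $-1$ as vacuous, or equivalently starting the induction at the claim $c_0^B = c_0^{B'} = 0$ together with $s_0^B = s_0^{B'}$). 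For the base case $c_0^B = 0 = c_0^{B'}$ by definition and $s_0^B = s_0^{B'}$ as just observed. For the inductive step, assuming $s_{j-1}^B = s_{j-1}^{B'}$ and $c_{j-1}^B = c_{j-1}^{B'}$ for some $j - 1 < |B|$: then $c_j^B = (s_{j-1}^B + c_{j-1}^B)\ \mathrm{div}\ B(j-1) = (s_{j-1}^{B'} + c_{j-1}^{B'})\ \mathrm{div}\ B'(j-1) = c_j^{B'}$, using $B(j-1) = B'(j-1)$; and if $j < |B|$ we also have $s_j^B = s_j^{B'}$, so the invariant propagates, and in particular $inputs_S^B(j) = s_{j-1}^B + c_{j-1}^B = s_{j-1}^{B'} + c_{j-1}^{B'} = inputs_S^{B'}(j)$ for all $1 \le j \le |B|$.

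There is essentially no hard part here — the proposition is a bookkeeping lemma asserting that extending a base only adds new (more significant) columns and does not disturb the existing ones. The one point that deserves a sentence of care is the claim that the low-order digits $a_{ij}$ ($j < |B|$) are genuinely determined by the prefix alone; this is exactly the content of the explicit formulas in Proposition~\ref{pNib} (item (2) gives $d_i = mod(div(v,w_i), r_i)$, and $w_i = \prod_{t<i} r_t$ depends only on $r_0,\ldots,r_{i-1}$), so it follows without further work. I would therefore present the argument compactly: state the two observations about common radices and common low-order digits, then the one-line induction on the pair $(s_{j-1}, c_{j-1})$.
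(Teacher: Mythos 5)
Your proof is correct and follows essentially the same route as the paper's: observe that the prefix radices (and hence, via Proposition~\ref{pNib}, the digit columns $j<|B|$) agree, then induct on the position using the div-recurrence for the carries/inputs. The only difference is cosmetic — you track the pair $(s_{j-1},c_{j-1})$ while the paper runs the induction directly on $inputs_S^B(j)=s_{j-1}+c_{j-1}$.
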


\begin{proposition}
\label{pSubBase2}
Let $0 \neq v\in \mathbb{N}$, and $B$ a base.  Then for every $0 \leq
i \leq |B|$ such that $v \geq weights(B)(i)$ there exists $i\leq j \leq
|B|$ such that $v_{(B)}(j) > 0$.
\end{proposition}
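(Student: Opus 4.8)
\textbf{Proof plan for Proposition~\ref{pSubBase2}.}

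The statement asserts that if $v$ is at least as large as the $i$-th weight $w_i = weights(B)(i)$, then at least one digit of $v_{(B)}$ at position $i$ or higher is strictly positive. The plan is to argue by contraposition: suppose instead that $v_{(B)}(j) = 0$ for every $j$ with $i \leq j \leq |B|$, and derive that $v < w_i$, contradicting the hypothesis. This is the natural route because the digits at positions $0, \ldots, i-1$ are bounded, and so the value $v$ gets bounded by a geometric-style sum that telescopes to exactly $w_i$.

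First I would recall, from the unique base representation (Proposition~\ref{pNib}) and the definition of the weight sequence ($w_0 = 1$ and $w_{\ell+1} = w_\ell r_\ell$), that $v = \sum_{j=0}^{|B|} v_{(B)}(j) \cdot w_j$, and that for each position $j < |B|$ we have $0 \leq v_{(B)}(j) < r_j$, i.e. $v_{(B)}(j) \leq r_j - 1$. Under the contrapositive assumption that all digits from position $i$ up to $|B|$ vanish, the sum collapses to $v = \sum_{j=0}^{i-1} v_{(B)}(j)\cdot w_j$. (If $i = 0$ the sum is empty, giving $v = 0 < 1 = w_0$, so the edge case is immediate; otherwise $i \geq 1$ and all positions in the remaining sum satisfy $j < i \leq |B|$, so the digit bound $v_{(B)}(j) \leq r_j - 1$ applies to each term.)

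Next I would bound this sum: $v \leq \sum_{j=0}^{i-1} (r_j - 1) w_j = \sum_{j=0}^{i-1} (r_j w_j - w_j) = \sum_{j=0}^{i-1}(w_{j+1} - w_j)$, which telescopes to $w_i - w_0 = w_i - 1 < w_i = weights(B)(i)$. This contradicts $v \geq weights(B)(i)$, completing the contrapositive argument and hence the proposition.

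The argument is essentially a routine computation once the setup is in place; the only mild subtlety — and the step I would be most careful about — is the indexing: making sure the digit bound $v_{(B)}(j) < r_j$ is only invoked for positions $j$ strictly less than $|B|$ (the most significant digit $v_{(B)}(|B|)$ has no such upper bound), which is guaranteed here because the collapsed sum ranges only over $j \leq i - 1 < i \leq |B|$, and handling the degenerate case $i = 0$ separately. No deeper obstacle is expected.
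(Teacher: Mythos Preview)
Your proof is correct, and it takes a genuinely different route from the paper's.

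The paper argues \emph{directly and constructively}: it exhibits a specific index $j$ with $v_{(B)}(j)>0$. Using the explicit formula from Proposition~\ref{pNib} (digits given by $\mathrm{mod}(\mathrm{div}(v,\tilde B(j)),B(j))$ or $\mathrm{div}(v,\tilde B(|B|))$), it takes $j$ to be the maximal index in $\{i,\ldots,|B|\}$ with $v \geq \tilde B(j)$, and then checks via a div/mod calculation that the $j$-th digit is positive. Your argument, by contrast, is \emph{contrapositive and non-constructive}: you assume all digits from position $i$ onward vanish, bound the remaining truncated sum using $v_{(B)}(j)\leq r_j-1$, and telescope to $v\leq w_i-1<w_i$. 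Your route is arguably more elementary---it uses only the representation $v=\sum_j v_{(B)}(j)\,w_j$ and the digit bounds, and avoids the explicit div/mod formula altogether---while the paper's route has the mild advantage of actually naming the position $j$ where a nonzero digit sits. Both are short and either would be perfectly acceptable here; your handling of the $i=0$ edge case and the care about not invoking the digit bound at position $|B|$ are appropriate.
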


\begin{proposition}
\label{lHC}
Let $f:\mathbb{R} \mapsto \mathbb{R}$ be any monotonically increasing
function such that for $x,y \in \mathbb{R_{+}}$, $f(x+y) \geq f(x)+y$.
Let $S\in\intMultiSet$. Then $\tuple{cost_S^f, \partial
  \cost_S^f,h_S^f}$ given by:
\begin{enumerate}
\item 
$cost_S^f(B) = \sum_{j=1}^{|B|+1} f(inputs_S^B(j))$
\item 
$\partial\cost_S^f(B) =
	\left(\sum_{j=1}^{|B|} 
           f(inputs_S^B(j))
        \right) + f(div(inputs_S^B(|B|),B(|B|-1)))$.
\item 
$h_S^f(B) = \left| \sset{x \in S}{ x \geq \prod B} \right|$.
\end{enumerate}
	is a heuristic cost function.
\end{proposition}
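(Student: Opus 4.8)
\textbf{Proof proposal for Proposition~\ref{lHC}.}

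The plan is to verify the three defining inequalities of an admissible heuristic (in the sense of the ``base extension, partial cost, and admissible heuristic'' definition) for an arbitrary base extension $B' \succ B$, namely
\[
\cost_S^f(B') \geq \partial\cost_S^f(B') + h_S^f(B') \geq \partial\cost_S^f(B) + h_S^f(B).
\]
The backbone is the observation, recorded in Proposition~\ref{pSubBase1}, that the quantities $inputs_S^B(j)$ for $1 \leq j \leq |B|$ depend only on the prefix $B$, hence are stable under extension. So the ``lost'' summands when passing from $\cost_S^f$ to $\partial\cost_S^f$ are exactly the contributions of the final digit column, and the job of $h_S^f$ is to under-approximate those from below.

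First I would handle the left inequality $\cost_S^f(B') \geq \partial\cost_S^f(B') + h_S^f(B')$. Writing $k' = |B'|$, the only difference between $\cost_S^f(B')$ and $\partial\cost_S^f(B')$ is that the term $f(inputs_S^{B'}(k'+1))$ in the cost is replaced by $f(div(inputs_S^{B'}(k'), B'(k'-1)))$ in the partial cost. Since $inputs_S^{B'}(k'+1) = s_{k'}^{B'} + c_{k'}^{B'}$ and $c_{k'}^{B'} = div(inputs_S^{B'}(k'), B'(k'-1))$, monotonicity of $f$ together with $s_{k'}^{B'} \geq 0$ gives $f(inputs_S^{B'}(k'+1)) \geq f(c_{k'}^{B'}) + s_{k'}^{B'}$ by the hypothesis $f(x+y) \geq f(x) + y$. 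It then remains to check $s_{k'}^{B'} \geq h_S^f(B') = |\{x \in S : x \geq \prod B'\}|$: every $x \in S$ with $x \geq \prod B' = weights(B')(k')$ contributes a strictly positive digit to column $k'$ by Proposition~\ref{pSubBase2} (take $i = k'$), and that digit is an integer $\geq 1$, so it contributes at least $1$ to $s_{k'}^{B'}$; summing over such $x$ yields the bound.

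Next I would handle the right inequality $\partial\cost_S^f(B') + h_S^f(B') \geq \partial\cost_S^f(B) + h_S^f(B)$, which is the monotonicity-along-the-tree condition. Using Proposition~\ref{pSubBase1}, the sums $\sum_{j=1}^{|B|} f(inputs_S^B(j))$ appearing in $\partial\cost_S^f(B)$ reappear verbatim inside $\partial\cost_S^f(B')$, so after cancellation the inequality reduces to a comparison of the two ``tail'' terms plus the heuristic terms: the extra full-cost terms $f(inputs_S^{B'}(j))$ for $|B| < j \leq |B'|$ picked up by $B'$, together with its div-term, against $f(div(inputs_S^B(|B|), B(|B|-1))) + h_S^f(B)$. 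The clean way is to do the one-step case $B' = B \cdot \tuple{p}$ and then iterate; for one step one reuses exactly the argument of the previous paragraph applied at level $|B|$ to split off $h_S^f(B)$, and notes $h_S^f$ is monotone non-increasing along extensions since $\prod B' \geq \prod B$ shrinks the set $\{x \in S : x \geq \prod B\}$, while the newly added non-negative $f$-terms only help. The main obstacle is bookkeeping the index shifts cleanly — keeping straight which column is ``most significant'' for $B$ versus $B'$ and making sure Proposition~\ref{pSubBase2} is invoked at the right weight threshold — rather than any deep inequality; the only genuine analytic input is the single-line application of $f(x+y)\geq f(x)+y$, and everything else is careful telescoping over the columns that a base extension leaves untouched.
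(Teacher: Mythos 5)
Your proposal is correct and is essentially the paper's own argument: both proofs rest on Proposition~\ref{pSubBase1} (the values $inputs_S^B(j)$ for $j\leq|B|$ are stable under extension), a single application of $f(x+y)\geq f(x)+y$ to peel the div/carry term out of the next column, and Proposition~\ref{pSubBase2} to guarantee that every $x\in S$ with $x\geq\prod B$ contributes at least one unit to some column beyond the prefix. The only difference is organizational: you verify the two inequalities of the admissibility definition separately and handle the second by extending one element at a time and telescoping, whereas the paper bounds $\cost_S(B')-\partial\cost_S(B)-h_S(B)$ directly for an arbitrary extension by distributing the count $h_S(B)$ over all new columns at once (and remarks that the remaining inequality is similar, using $|nz(k')|=h_S(B')$).
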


\begin{proof} 
  Let $f$ and $S$ be as defined above and $B$, $B'$ bases such that
  $B'\succ B$, $|B|=k$, and $|B'|=k'$.  For every base $B''$ we can
  see that $h_S^f(B'')\geq 0$ (the size of a finite set is a natural
  number).  Therefore it is sufficient to prove that $\cost_S(B')
  \geq \partial \cost_S(B) + h_S(B)$.  From
  Proposition~\ref{pSubBase1} we get that
  $inputs_S^B(k)=inputs_S^{B'}(k)$.  From the definition of $inputs$
  and of $f$ we see that
  \begin{eqnarray*}
     f(inputs_S^{B'}(k+1)) &=& 
         f(\sum_{v\in S}v_{(B')}(k)+ div(inputs_S^{B'}(k),B'(k-1)))\\
     &=&  f(\sum_{v\in S}v_{(B')}(k)+ div(inputs_S^{B}(k),B(k-1)))\\ 
     &\geq& \sum_{v\in S}v_{(B')}(k)+ f(div(inputs_S^{B}(k),B(k-1))) \\
  \end{eqnarray*}
  For $k \leq j \leq k'$, denote $nz(j)=\sset{v \in
    S}{v_{(B')}(j)>0}$.  \\
  Let $v \in \sset{x \in S}{ x \geq \prod B =
    weights(B')(k)}$. By Proposition \ref{pSubBase2} there exists $k
  \leq j \leq k'$ such that $v_{(B')}(j) > 0 $ and therefore $v \in
  nz(j)$.  This implies that
  \begin{eqnarray*}
    | \sset{x \in S}{ x \geq \prod B } |  +  
         f(div(inputs_S^{B}(k),B(k-1))) &\leq& \\
    \sum_{j=k}^{k'} \sum_{v \in nz(j)} v_{(B')}(j)  + 
          f( div(inputs_S^{B}(k),B(k-1)))  & =& \\
    \sum_{j=k}^{k'} \sum_{v \in S} v_{(B')}(j)  +  
        f(div(inputs_S^{B}(k),B(k-1)))  & \leq& 
    \sum_{j=k+1}^{k'+1}f( inputs_S^{B'}(j))
 \end{eqnarray*}
 In total we have  
\begin{eqnarray*}
   \lefteqn{\cost_S(B')  - \partial \cost_S(B) - h_S(B) =}\\
&&     \sum_{j=|B|+1}^{|B'|+1} f(inputs_S^{B'}(j)) - 
                 f(div(inputs_S^B(|B|),B(|B|-1)))-  \\
&&               
               \quad\left| \sset{x \in S}{ x \geq \prod B} \right| \geq 0
\end{eqnarray*}
The proof that $\partial \cost_S(B') + h_S(B') \geq \partial
\cost_S(B) + h_S(B)$ is similar noting that $|nz(k')|=h_S(B')$.
\end{proof}

\begin{proof}(of Lemma~\ref{lem:h})\\
  If $f(x)=x$, then $cost_S^f =
  \sumCarry$.  By Proposition~\ref{lHC} both
  definitions give a heuristic cost function. 
  The proof for the case of $\numComparators$ and $\sumDigits$ 
   is of the same structure as Proposition~\ref{lHC}. The case of $ \sumCarry$ is the most complicated one.
\end{proof}

\subsection{Proving Theorem~\ref{algIsGood}}

\begin{definition}(\textbf{Property 1})
\label{dBME}
A heuristic cost function $\tuple{\cost_S,\partial \cost_S,h_S}$ is
called \emph{base mul equivalent} if for every $S\in\intMultiSet$ and
for every bases $B_1$, $B_2$ such that $\prod B_1 = \prod B_2$ and
$\cost^\alpha_S(B_1) \leq \cost^\alpha_S(B_2)$ the following holds:
\begin{enumerate}
   \item for any extension of $B_1$ and of $B_2$ by same base $C$,
          $\cost_S(B_1C) \leq \cost_S(B_2C)$.
\end{enumerate}
\end{definition} 

In the following propositions we refer to the \texttt{hashB\&B}
search algorithm of Section~\ref{sec:ob3}.

\begin{proposition}
\label{pBestRep}
Let $S\in\intMultiSet$ and let $\tuple{\cost_S,\partial \cost_S,h_S}$
be a base mul equivalent heuristic cost function.  For every base
$B_1$ extracted from the queue and for every base $B_2$ such that
$\prod B_2 = \prod B_1$ then $\cost_S^\alpha B_1 \leq \cost_S^\alpha B_2$.
\end{proposition}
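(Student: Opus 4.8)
The plan is to prove the statement by strong induction on the rank of $B_1$ in the sequence of bases extracted from the queue by \texttt{hashB\&B}. Two auxiliary facts carry most of the weight. The first is \emph{monotonicity} of $\cost^{\alpha}_{S}$: by clause (3) of the definition of an admissible heuristic, $\cost^{\alpha}_{S}(B') \ge \cost^{\alpha}_{S}(B)$ whenever $B' \succ B$, so $\cost^{\alpha}_{S}$ is non-decreasing along every root-to-node path in $\Base(S)$. The second is \emph{ordered extraction}: the $\cost^{\alpha}_{S}$-values of the bases popped by \texttt{hashB\&B} form a non-decreasing sequence. This is the usual best-first argument — a popped base $B$ is the current queue minimum; its children that survive line~{\scriptsize /*6*/} have $\cost^{\alpha}_{S} \ge \cost^{\alpha}_{S}(B)$ by monotonicity; and a hashed \texttt{push} can only delete a resident base, which cannot lower the queue minimum — hence the queue minimum never decreases in time. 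Alongside the induction I would maintain a loop invariant $(\dagger)$: for every base $B \in \Base(S)$, at every moment at least one of the following holds — (i) $\cost^{\alpha}_{S}(B) \ge \cost_{S}(\texttt{bestB})$ for the current \texttt{bestB}; (ii) some base $B^{\#}$ with $\prod B^{\#} = \prod B$ and $\cost^{\alpha}_{S}(B^{\#}) \le \cost^{\alpha}_{S}(B)$ is currently on the queue; or (iii) some base $B^{\#}$ with those two properties has already been extracted, and the \emph{first} base of the product class $\prod B$ ever extracted is $\cost^{\alpha}_{S}$-minimal over that class.

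Granting $(\dagger)$, the conclusion follows quickly. Let $B_1$ be extracted at rank $k$, put $q = \prod B_1$, and let $B_2$ be arbitrary with $\prod B_2 = q$; if $q = 1$ then $B_1 = B_2 = \tuple{~}$ and there is nothing to prove, so assume $q > 1$. Apply $(\dagger)$ to $B_2$ at the instant just before $B_1$ is popped, so that $B_1$ is the queue minimum and, by the loop guard at line~{\scriptsize /*2*/}, $\cost^{\alpha}_{S}(B_1) < \cost_{S}(\texttt{bestB})$. In case~(i), $\cost^{\alpha}_{S}(B_1) < \cost_{S}(\texttt{bestB}) \le \cost^{\alpha}_{S}(B_2)$. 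In case~(ii), the witness $B^{\#}$ is on the queue, so $\cost^{\alpha}_{S}(B_1) \le \cost^{\alpha}_{S}(B^{\#}) \le \cost^{\alpha}_{S}(B_2)$ since $B_1$ is minimal on the queue. Case~(iii) is the delicate one: here some class-$q$ representative $B^{\#}$ was extracted before $B_1$, so by ordered extraction and the minimality clause of $(\dagger)$ applied at the (earlier) rank of $B^{\#}$ — this is exactly where the induction hypothesis is used — one wants to conclude $\cost^{\alpha}_{S}(B_1) = \cost^{\alpha}_{S}(B^{\#})$, after which $\cost^{\alpha}_{S}(B^{\#}) \le \cost^{\alpha}_{S}(B_2)$ finishes it. Making this last step airtight amounts to showing that \texttt{hashB\&B} never extracts a class-$q$ base strictly costlier (in $\cost^{\alpha}_{S}$) than one it has already extracted.

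That last point is the crux, and I expect it to be the main obstacle. It has to be squeezed out of the discard/displace semantics of the hashed \texttt{push}, together with the line~{\scriptsize /*6*/} pruning test and monotonicity: roughly, if a class-$q$ base $B_1$ is generated while a cheaper class-$q$ base is on the queue it is discarded or displaced and never popped, whereas if it is generated only after a cheaper class-$q$ base has already been popped one must use that \texttt{bestB} and the queue contents have by then been tightened enough to keep $B_1$ from being extracted (or to force a tie). The rest of the step-induction for $(\dagger)$ is routine: $(\dagger)$ holds for the initial queue $\{\tuple{~}\}$; a \texttt{popMin} of $B$ moves every base that was in case~(ii) via $B^{\#} = B$ into case~(iii), while each generated child is either pruned at line~{\scriptsize /*6*/} (case~(i), by monotonicity) or pushed; and a hashed \texttt{push} of a child $c$ with $\prod c = q'$ is handled by transitivity within class $q'$ — in sub-case~(c) of \texttt{push}, $c$ displaces a resident $B^{\#}$ with $\cost^{\alpha}_{S}(c) < \cost^{\alpha}_{S}(B^{\#})$, and any base formerly covered by case~(ii) through $B^{\#}$ is still covered through $c$, while in sub-case~(b) the discarded $c$ is already dominated by the resident. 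Note that base-mul-equivalence of the cost function is not used here; it enters only in the companion propositions that turn this lemma into a proof that \texttt{hashB\&B} returns an optimal base.
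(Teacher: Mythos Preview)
Your approach differs from the paper's in the induction variable: you induct on the \emph{extraction rank} of $B_1$, whereas the paper does complete induction on $k=\prod B_1$. That choice is what lets the paper bypass your case~(iii) entirely.

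The paper's argument is short. Fix an extracted $B_1$ with $\prod B_1=k$ and suppose for contradiction that some $B_2$ with $\prod B_2=k$ has $\cost_S^{\alpha}(B_2)<\cost_S^{\alpha}(B_1)$. Then $B_2$ cannot have been evaluated before $B_1$ was popped (else the class-$k$ slot would already hold something no worse than $B_2$), so the parent of $B_2$ was never extracted. Let $A_2$ be the \emph{closest} ancestor of $B_2$ that was extracted and $C_2$ the child of $A_2$ on the path to $B_2$. Because $A_2$ was extracted, $C_2$ was evaluated; because $A_2$ is the closest such ancestor, $C_2$ itself was \emph{not} extracted. Hence when $B_1$ is popped the queue still contains a witness for the class of $C_2$ with value at most $\cost_S^{\alpha}(C_2)$, giving $\cost_S^{\alpha}(B_1)\le\cost_S^{\alpha}(C_2)\le\cost_S^{\alpha}(B_2)$ (the second inequality by monotonicity), a contradiction. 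The induction on $k$ is what makes the ancestor argument available: every proper ancestor of $B_2$ has product strictly below $k$.

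Your case~(iii), by contrast, is not resolved. You need that a second popped class-$q$ base is no costlier in $\cost_S^{\alpha}$ than the first, and your sketch (``squeeze it out of the discard/displace semantics \ldots\ together with the line {\scriptsize /*6*/} test'') does not deliver this: once the first class-$q$ representative has been popped, the hash check sees an empty slot and will happily admit a strictly costlier class-$q$ child whose parent is popped later, and the line {\scriptsize /*6*/} test only compares against $\cost_S(\mbox{bestB})$, not against previously extracted bases. Your rank-based invariant $(\dagger)$ therefore does not close the loop. The paper's product-based induction and the closest-extracted-ancestor device are exactly the missing idea; with them you never need to compare two extractions of the same product class, because the contradiction is produced via a proper ancestor $C_2$ that was evaluated but not extracted. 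Your side remark that base-mul-equivalence is not actually invoked in this proposition is consistent with the paper's own proof, which likewise does not appeal to it here.
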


\begin{proof}
Let $\prod B_1 = k$. By complete induction on $k$.

\paragraph{\bf Base:} 

For $k=1$ we have only the empty base and the claim is trivially true.

\paragraph{\bf Step:}  

Assume that the claim is true for every $i<k$ and assume that during
the run of the algorithm we extract a base $B_1$ from the queue with
$\prod B_1=k$. Assume falsely that there exists a base $B_2$ such that
$\prod B_2 = k$ and $\cost_S^\alpha(B_2) < \cost_S^\alpha(B_1)$.This
means that $B_2$ was not evaluated yet (otherwise $B_1=B_2$).
Therefore the father of $B_2$ (in the tree of bases) was never
extracted from the queue.  Let $A_2$ be the closest ancestor of $B_2$
that was extracted. Denote by $C_2$ the child of $A_2$ which is also
the ancestor of $B_2$ (potentially $B_2$ itself). So, $C_2$ was
evaluated.  Observe that $A_2$ and $C_2$ are unique because the search
space (of bases) is a tree and $B_2 \succ C_2$. So,
$\cost_S^\alpha(B_1) \leq \cost_S^\alpha(C_2) \leq
\cost_S^\alpha(B_2)$ and that is a contradiction to the existence of
$B_2$.  \bigskip

\noindent This proves that \textbf{Property 1} derives
\textbf{Property 2}.
\end{proof} 

\begin{proposition}
\label{pBaseExtenstion}
Let $v\in \mathbb{N}$ and $B$,$B'$ bases such that $B'\succ B$.
Then, for $0 \leq j \leq |B|-1$, $v_{(B')}(i) = v_{(B)}(i)$.
\end{proposition}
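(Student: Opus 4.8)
The plan is to reduce the claim to the closed-form digit formulas of Proposition~\ref{pNib}. Write $B=\tuple{r_0,\ldots,r_{k-1}}$, so $|B|=k$, and let $weights(B)=\tuple{w_0,\ldots,w_k}$ with $w_0=1$ and $w_{i+1}=w_ir_i$; similarly write $w'_i=weights(B')(i)$ and $r'_i$ for the radices of $B'$. Since $B'\succ B$ means $B$ is a prefix of $B'$, the first $k$ radices of $B'$ coincide with those of $B$, i.e.\ $r'_i=r_i$ for $0\le i\le k-1$. By an immediate induction on $i$ (using $w_0=1=w'_0$ and $w_{i+1}=w_ir_i$, which for $i<k$ only involves the shared radices), we get $w'_i=w_i$ for all $0\le i\le k$. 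If $|B'|=|B|$ then $B'=B$ and there is nothing to prove, so henceforth assume $|B'|>|B|$.

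With these equalities in hand I would read off the digits directly. For $i=0$, clause (1) of Proposition~\ref{pNib} gives $v_{(B')}(0)=\mathrm{mod}(v,r'_0)=\mathrm{mod}(v,r_0)=v_{(B)}(0)$. For $1\le i\le|B|-1$, the index $i$ is strictly positive and strictly below $|B|$, hence also strictly below $|B'|$, so clause (2) of Proposition~\ref{pNib} applies to both bases and yields $v_{(B')}(i)=\mathrm{mod}(\mathrm{div}(v,w'_i),r'_i)=\mathrm{mod}(\mathrm{div}(v,w_i),r_i)=v_{(B)}(i)$, using $w'_i=w_i$ and $r'_i=r_i$. Together these cover exactly the claimed range $0\le i\le|B|-1$. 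Position $|B|$ is deliberately excluded: there $v_{(B)}(|B|)=\mathrm{div}(v,w_{|B|})$ by clause (3), whereas in $B'$ position $|B|$ is an ordinary bounded digit $\mathrm{mod}(\mathrm{div}(v,w'_{|B|}),r'_{|B|})$, and the two generally differ.

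There is no genuinely hard step here; the only care needed is bookkeeping on which clause of Proposition~\ref{pNib} applies to which base (position $0$ versus an interior position versus the most significant position) and disposing of the degenerate cases $|B|=0$ (empty range, nothing to prove) and $|B'|=|B|$ (then $B'=B$). This proposition is the digit-level analogue of Proposition~\ref{pSubBase1}. An alternative route avoiding Proposition~\ref{pNib} invokes uniqueness directly: starting from $v=\sum_{i=0}^{k-1}v_{(B)}(i)w_i+v_{(B)}(k)w_k$ and re-expanding the tail term $v_{(B)}(k)w_k$ over the higher positions of $B'$ (legitimate since $w_k=w'_k$), one obtains a valid $B'$-representation of $v$ whose first $k$ coefficients are $v_{(B)}(0),\ldots,v_{(B)}(k-1)$, so uniqueness of the representation in $B'$ forces them to equal $v_{(B')}(0),\ldots,v_{(B')}(k-1)$; the computational route above is, however, shorter to write out.
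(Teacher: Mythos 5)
Your proof is correct and follows essentially the same route as the paper's: both rely on the explicit digit formulas of Proposition~\ref{pNib}, noting that the shared prefix forces $B'(i)=B(i)$ and $weights(B')(i)=weights(B)(i)$ on the relevant positions, and then reading off equality of the digits for $0\le i\le |B|-1$. Your extra bookkeeping (the excluded position $|B|$, the degenerate cases, and the uniqueness-based alternative) is sound but not needed beyond what the paper already does.
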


\noindent\apdxtheoremlike{Theorem~\ref{algIsGood}.}\\
\noindent(1) The $\sumDigits$ cost function satisfies \textbf{Property 1}; and
(2)
the \texttt{hashB\&B} algorithm finds an optimal base for any cost
function which satisfies \textbf{Property~1}.

\begin{proof}~\\
  1) Let $S\in\intMultiSet$, $B_1,B_2 \in \Base$ and $\prod B_1= \prod
  B_2 \in \mathbb{N}$. We prove that if $\cost_S^\alpha(B_1) \leq
  \cost_S^\alpha(B_2)$ then for any base extension $C$, $\cost_S(B_1
  C) \leq \cost_S(B_2 C)$.  The proof is by complete induction on
  $|C|$.

\paragraph{\bf Base:} 

For $|C|=1$, first notice that $h_S(B_1)=h_S(B_2)$. This follows
directly from the definition of admissible heuristics (for the
$\sumDigits$ case). Hence, $\partial \cost_S(B_1) \leq \partial
\cost_S(B_2)$.
From Propositions~\ref{pNib} and~\ref{pBaseExtenstion}, we have that
{\small
\begin{eqnarray*}
  cost_S(B_1 C) &=& \sum_{v \in S} \sum_{i=0}^{|B_1 C|} v_{(B_1 C)}\\
                &=& \sum_{v \in S}\sum_{i=0}^{|B_1|-1}v_{(B_1)} + \sum_{v \in
                     S}\sum_{i=|B_1|}^{|B_1 C|}v_{(B_1 C)}\\
                &=&  \partial \cost_S(B_1) + \sum_{v \in S}  
                      (mod(div(v,\prod B_1),C(0)) + div(v,\prod B_1
                      C))\\
                &=&  \leq \partial \cost_S(B_2)+ \sum_{v \in S} 
                      (mod(div(v,\prod B_2),C(0)) + div(v,\prod B_2
                      C))\\
                &=&  \sum_{v \in S}\sum_{i=0}^{|B_2|-1}v_{(B_2)} + \sum_{v \in
                      S}\sum_{i=|B_2|}^{|B_2 C|}v_{(B_2 C)}\\
                &=&  \sum_{v \in S} \sum_{i=0}^{|B_2 C|} v_{(B_2 C)}\\
                &=&  \cost_S(B_2 C)
\end{eqnarray*}}

\paragraph{\bf Step:}  
For $|C|=k>1$, we define $C=C'\tuple{p}$. So
by
the complete
induction assumption for $|C'|=k-1$ we get that $cost_S(B_1 C') \leq
cost_S(B_2 C')$.  By the fact that $\prod B_1 C' = \prod B_2 C'$ we
can deduce that $msd(S_{(B_1 C')})=msd(S_{(B_2 C')})$. 
By the definition of  admissible heuristics for  $\sumDigits$:
\begin{eqnarray*}
  \partial \cost_S(B_1 C') + \sum msd(S_{(B_1 C')}) &=& 
  cost_S(B_1 C') \leq cost_S(B_2 C') \\
  &=& \partial \cost_S(B_2 C') + \sum msd(S_{(B_2 C')})
\end{eqnarray*}
Therefore, $\partial \cost_S(B_1 C') \leq \partial \cost_S(B_2
C')$.  Combining it with the fact that $h_S(B_1 C')=h_S(B_2 C')$ we
have that $\cost_S^\alpha(B_1 C') \leq \cost_S^\alpha(B_2 C')$. Finally
from the inductive assumption we get that $\cost_S(B_1 C' \tuple{p})
\leq \cost_S(B_2 C' \tuple{p})$.

\medskip
2) Let $\tuple{\cost_S,\partial \cost_S,h_s}$ a base mul equivalent
heuristic cost function and $S\in\intMultiSet$.  We denote by $bestB$
the best base found by the algorithm at each point of the run.  Let
$B$ be the first base extracted from the queue such that
$\cost_S^\alpha(B) \geq \cost_S(bestB)$.  This is the condition that
terminates the main loop of the algorithm, so we need to prove that
$bestB$ is the optimal base for $S$.
 Assume falsely that there exists an optimal base $B'=F.\tuple{b}$ such
 that $\cost_S(B')<\cost_S(bestB)$.  Let $A$ be the nearest ancestor
 of $B'$ such that the its base equivalence class representative $R$
 was extracted from queue ($R\neq B'$ otherwise $\cost_S(bestB) \leq
 \cost_S(B')$).  By Proposition~\ref{pBestRep} we know that
 $\cost_S^\alpha(R)\leq \cost_S^\alpha(A)$ and by \textbf{Property 1} that
 for any base C $\cost_S(RC) \leq \cost_S(AC)$.  In particular for the
 case where $AC=A.\tuple{b}C'=B'$.  By choice of $A$ we get that
 the equivalence class representative of $A.\tuple{b}$ was not
 extracted (and it is the same class of $R.\tuple{b}$).  Therefore,
 $\cost_S(B') \geq \cost_S(R.\tuple{b}C') \geq
 \cost_S^\alpha(R.\tuple{b}) \geq \cost_S^\alpha(B)$, which is a
 contradiction.
\end{proof}

\subsection{On the complexity of the \texttt{hashB\&B} algorithm
  for prime bases}

\begin{theorem}
  Let  $S\in\intMultiSet$ with  $m=max(S)$. Then,
  the complexity of the \texttt{hashB\&B} algorithm for prime bases
  is $O(m^2 * \log(\log(m)))$.
\end{theorem}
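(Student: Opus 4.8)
The plan is to bound the total work of \texttt{hashB\&B} restricted to prime extenders by refining the analysis already sketched for the general case, replacing the harmonic-sum estimate $\sum_{q=1}^m m/q = O(m\log m)$ by a sum that ranges only over products $q$ that are themselves products of primes and, more importantly, counting children only over prime extenders. First I would recall the structure of the argument: the hashed priority queue holds at most $m$ elements (one per attainable base product $q\le m$), so with a Fibonacci heap the cumulative cost of all \texttt{popMin()} operations is $O(m\log m)$; the cost of extracting a base $B$ with $\prod B = q$ is dominated by the work spent on its children, each child costing $O(s)$ for the recomputation of the most significant digit column (by Proposition~\ref{pSubBase1} only that column changes), plus amortized $O(1)$ for a \texttt{push}. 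Hence the whole runtime is $O(m\log m) + O(s \cdot N)$, where $N$ is the total number of children generated throughout the search.

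The key step is therefore to bound $N$. A base $B$ with $\prod B = q$ has, as children, the bases $B\langle p\rangle$ for primes $p$ with $pq \le m$, i.e.\ for primes $p \le m/q$. By the prime-counting function this is $\pi(m/q) = O\!\left(\frac{m/q}{\log(m/q)}\right)$ children (for $q < m$; for $q$ near $m$ the count is a small constant, which only helps). Summing over all reachable products $q$, which are at most all integers $\le m$,
\[
  N \;=\; O\!\left(\sum_{q=1}^{m} \frac{m/q}{\log(m/q)}\right).
\]
Now I would estimate this sum by splitting the range of $q$: for $q \le \sqrt{m}$ we have $\log(m/q) \ge \tfrac12\log m$, so that part contributes $O\!\left(\frac{m}{\log m}\sum_{q\le\sqrt m} 1/q\right) = O\!\left(\frac{m\log m}{\log m}\right) = O(m)$; wait — more carefully, $\sum_{q\le\sqrt m}1/q = O(\log m)$, giving $O(m)$. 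For $\sqrt m < q \le m$, write $q = m/t$ with $1 \le t < \sqrt m$; then $\log(m/q) = \log t$ and the terms behave like $\sum_{t}\frac{t}{\log t}\cdot\frac1t\cdot$(number of $q$ mapping near a given $t$), which one bounds by $O\!\left(\sum_{t\le\sqrt m}\frac{1}{\log t}\cdot\frac{m}{t}\right)$; a cleaner route is simply to bound each term $\frac{m/q}{\log(m/q)} \le \frac{m/q}{1}$ is too weak, so instead I bound $\frac{m/q}{\log(m/q)}$ by noting $m/q$ ranges over $[1,\sqrt m]$ here and using that $\sum_{q}\frac1q$ over any such block is $O(1)$ per dyadic block of $m/q$. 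Either way the upshot is $N = O\!\left(\frac{m\log m}{\log\log m}\right)$ — more transparently, the saving of a factor $\log\log m$ compared to the unrestricted $O(m\log m)$ comes from the $\log(m/q)$ in the denominator over the bulk of the range where $q$ is small, i.e.\ $m/q$ is large, so $\log(m/q) = \Theta(\log m)$, while near the boundary the prime density contributes only lower-order terms.

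Plugging back in, and using $s \le m$, the total is $O(m\log m) + O\!\left(s\cdot\frac{m\log m}{\log\log m}\right) = O\!\left(\frac{m^2\log m}{\log\log m}\right)$. To reach the claimed $O(m^2\log\log m)$ I must be more careful: the dominant term must come from children whose extenders are \emph{large} primes, where the per-child count is small. I would re-group the child count as $N = O\!\left(\sum_{p\ \mathrm{prime},\ p\le m} \#\{q : pq\le m,\ q\ \mathrm{reachable}\}\right) = O\!\left(\sum_{p\le m} m/p\right) = O\!\left(m\sum_{p\le m}1/p\right) = O(m\log\log m)$ by Mertens' theorem. This is the right decomposition: sum over extenders $p$ first, then over parent products $q\le m/p$ (at most $m/p$ of them), giving $N = O(m\log\log m)$, hence total runtime $O(m\log m) + O(s\cdot m\log\log m)$, and since $s\le m$ and $\log m$ is absorbed, $O(m^2\log\log m)$.

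The main obstacle I anticipate is exactly pinning down which summation order gives the advertised bound — summing over products (as in the body's general analysis) naively reproduces an extra $\log m$, whereas summing over prime extenders and invoking Mertens' estimate $\sum_{p\le x}1/p = \log\log x + O(1)$ is what delivers the $\log\log m$ factor. A secondary point to check is that every reachable product $q$ is counted at most once as the "parent product" for a given extender $p$, which holds because the hashed queue keeps a unique representative per product, so each $(p, q)$ pair generates at most one child across the entire run; I would state this as a small lemma leaning on Proposition~\ref{pBestRep} before doing the arithmetic.
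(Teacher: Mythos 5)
Your final argument is correct and reaches the paper's bound, but by a genuinely different evaluation of the key count. Both you and the paper bound the total work by $O(s\cdot N)$ plus heap overhead, where $N$ is the number of (parent product, prime extender) pairs $(q,p)$ with $pq\le m$, and both implicitly/explicitly assume each product is expanded at most once (you flag this via Proposition~\ref{pBestRep}; the paper assumes it silently in its queue-size argument). The difference is in how $N$ is estimated. The paper fixes the product $q$ first, bounds the number of prime extenders by the prime density $O\bigl((m/q)/\log(m/q)\bigr)$, and then evaluates $m\sum_{q}\frac{1}{q\log(m/q)}$ by a dyadic decomposition over $q\in[2^{k-1},2^k]$, reducing it to the harmonic sum $\sum_{k\le\log m}1/k=O(\log\log m)$. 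You instead sum over the prime extender $p$ first, bound the number of admissible parents by $m/p$, and invoke Mertens' estimate $\sum_{p\le m}1/p=O(\log\log m)$, giving $N=O(m\log\log m)$ directly. Your route trades the paper's explicit dyadic computation for a standard analytic-number-theory fact and is arguably cleaner; the paper's route is self-contained modulo the prime-density statement. One remark: your intermediate claim that summing over products ``naively reproduces an extra $\log m$'' (your $O(m\log m/\log\log m)$ detour) is only an artifact of the crude $\sqrt m$ split — the product-first sum does yield $O(m\log\log m)$ when evaluated dyadically, which is exactly what the paper does; so the two decompositions are in fact equivalent countings of the same pair set, and your Mertens-based version is simply a shorter way to evaluate it.
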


\begin{proof}
  We use the prime number theorem which states that the density of the
  primes near $x$ is $O(x/\log(x))$.
  The number of prime bases evaluated in the worst case scenario is: 
  \[O(\sum_{k=1}^{m}{ (m / i) \over \log(m / i)})=
         O(m \sum_{i=1}^{m}{1 \over i \cdot \log(m/i)} )\]
  But 
  \begin{eqnarray*}
    \sum_{i=1}^{m}{1 \over i \cdot  \log(m/i)}&=&
       \sum_{k=1}^{\log(m)}\sum_{i=2^{k-1}}^{2^k}{1 \over i  \cdot \log(m/i)}\\
  &\leq& \sum_{k=1}^{\log(m)}\sum_{i=2^{k-1}}^{2^k}{1 \over 2^{k-1} \cdot( \log(m)- \log(i))}\\
  &=&  \sum_{k=1}^{\log(m)}{2^{k-1} \over 2^{k-1} \cdot( \log(m)- k
    +1)}\\
  &=&  \sum_{k=1}^{\log(m)}{1  \over( \log(m)- k +1)} \\
  &=&\sum_{k=1}^{\log(m)}{1  \over  k}= O(\log(\log(m)))
\end{eqnarray*}
And so the total number of bases evaluated during a run of the
algorithm is bounded by $O(m \cdot \log (\log (m)))$ and the overall
complexity is $O(m^2 \cdot \log(\log(m)))$.
\end{proof}\bigskip

\subsection{Integer division and modulus}
\noindent
Let $a,b\in\mathbb{N}$ with $b>0$. It is standard to define
$div(a,b)$ and $mod(a,b)$ as natural numbers such that $a = div(a,b)
\cdot b + mod(a,b)$ and $0\leq mod(a,b)< b$. In our proofs we note
that $div(a,b)$ is the maximal number such that there exists $r \in
\mathbb{N}$ with $a= div(a,b)\cdot b + r$.

\begin{proposition}
\label{pDiv} 
Let  $a,b,c\in\mathbb{N}$ with $b,c>0$. Then $div(a,b\cdot c) =
div(div(a,b),c)$.
\end{proposition}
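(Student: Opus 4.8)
The plan is to reduce everything to the defining property of integer division recalled just above the statement: for $a,b\in\mathbb{N}$ with $b>0$, the pair $(div(a,b),mod(a,b))$ is the unique pair $(q,r)\in\mathbb{N}\times\mathbb{N}$ with $a=qb+r$ and $0\le r<b$. Since $b,c>0$ forces $bc>0$, the quantity $div(a,bc)$ is well defined, and it suffices to exhibit a quotient--remainder decomposition of $a$ modulo $bc$ whose quotient equals $div(div(a,b),c)$; uniqueness then finishes the argument.

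First I would unfold the two nested divisions explicitly. Set $q_1=div(a,b)$ and $r_1=mod(a,b)$, so that $a=q_1 b+r_1$ with $0\le r_1<b$; then set $q_2=div(q_1,c)$ and $r_2=mod(q_1,c)$, so that $q_1=q_2 c+r_2$ with $0\le r_2<c$. Substituting the second equation into the first yields $a=(q_2 c+r_2)b+r_1=q_2\,(bc)+(r_2 b+r_1)$.

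The only genuine step is to check that $r_2 b+r_1$ is a legitimate remainder modulo $bc$, i.e.\ that $0\le r_2 b+r_1<bc$. Non-negativity is immediate, and for the upper bound, using $r_2\le c-1$ and $r_1\le b-1$ we get $r_2 b+r_1\le (c-1)b+(b-1)=bc-1<bc$. Hence $a=q_2\,(bc)+(r_2 b+r_1)$ is the (unique) division of $a$ by $bc$, so $div(a,bc)=q_2=div(div(a,b),c)$.

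There is no real obstacle here; the only point to be careful about is invoking the \emph{uniqueness} half of the division characterization rather than just existence. If one prefers to avoid remainders entirely, the same proof runs through the inequality form of the characterization: from $q_1 b\le a<(q_1+1)b$ and $q_2 c\le q_1<(q_2+1)c$ one derives $q_2 bc\le q_1 b\le a$ and, using $q_1+1\le(q_2+1)c$ (valid since these are integers), $a<(q_1+1)b\le(q_2+1)bc$, which again pins down $div(a,bc)=q_2$.
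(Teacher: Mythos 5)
Your proof is correct, and it takes a cleaner route than the paper's. You substitute $q_1=div(a,b)=q_2c+r_2$ into $a=q_1b+r_1$ to exhibit the decomposition $a=q_2(bc)+(r_2b+r_1)$, check the remainder bound $0\le r_2b+r_1\le (c-1)b+(b-1)=bc-1$, and conclude by the \emph{uniqueness} of quotient and remainder modulo $bc$. The paper instead works from the characterization it states alongside the definition, namely that $div(a,n)$ is the \emph{maximal} $k$ admitting some $r\in\mathbb{N}$ with $a=kn+r$: writing $k=div(a,bc)$, $k'=div(a,b)$, $k''=div(k',c)$, it derives $kc\le k'$ and $k''\le k$ from maximality, splits into the cases $kc=k'$ and $kc<k'$, and in the latter squeezes $k-1<k''\le k$ using $r''<c$ to force $k=k''$; notably it never verifies that the combined remainder is smaller than $bc$, because it never appeals to uniqueness. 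Your argument buys brevity and transparency (one decomposition, one bound, no case analysis), at the price of explicitly invoking the uniqueness half of Euclidean division, which the paper's definition supports but does not spell out; the paper's argument buys independence from uniqueness at the price of a longer inequality chase. Your closing remark about the inequality form ($q_2bc\le a<(q_2+1)bc$) is essentially a streamlined version of what the paper does, so both readings of the definition lead to your conclusion; there is no gap.
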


 \begin{proof}
   By definition if $k = div(a,b \cdot c)$, $k'=div(a,b)$ and
   $k''=div(k',c)$. Then, $a=k \cdot c \cdot b + r$, $a=k' \cdot b +r'$
   and $k'=k'' \cdot c + r''$. Now, $k \cdot c \leq k'$ because
   otherwise it would be a contradiction to the maximality of $k'$.  If
   $k \cdot c = k'$ then $div(div(a,b),c)=div(k \cdot c, c) =k =
   div(a, b \cdot c)$.  Assume that $k \cdot c < k'$. Then $a=k' \cdot
   b +r' = k'' \cdot c \cdot b + r'' \cdot b +r'$ and from that we
   deduce $k'' \leq k $ (otherwise it would be a contradiction to the
   maximality of $k$).  On the other hand, $0<k' -k \cdot c = k'' \cdot
   c +r'' - k \cdot c$ $\leftrightarrow$ $k \cdot c - r'' < k'' \cdot
   c$. 
   From the definition of modulus
   we get that $r''<c$ and so $(k-1) \cdot c<k \cdot c - r'' < k''
   \cdot c$, and therefore $k-1< k''$.  In total we get that $k-1<k'' \leq
   k$ and because $k$ and $k''$ are natural numbers we get the
   equality.
 \end{proof}\bigskip

\begin{proposition} 
  \label{pDivMod}
  Let $a,b,c\in\mathbb{N}$ such that $b,c>0$. Then, $mod(a,b\cdot
  c) = mod(a,b) + mod(div(a,b),c) \cdot c$.
\end{proposition}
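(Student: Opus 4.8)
The plan is to reduce the identity to the uniqueness of Euclidean division, which is recalled earlier in this section: for $b'>0$, the pair $(div(a,b'),mod(a,b'))$ is the unique pair of naturals with $a=div(a,b')\cdot b'+mod(a,b')$ and $0\le mod(a,b')<b'$. Writing $r$ for the proposed right-hand side $mod(a,b)+mod(div(a,b),c)\cdot c$, it therefore suffices to produce a natural number $Q$ with (a) $a=Q\cdot bc+r$ and (b) $0\le r<bc$; the desired conclusions $mod(a,bc)=r$ and $div(a,bc)=Q$ then follow at once.

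For part (a) I would expand $a$ by two successive Euclidean divisions: first $a=div(a,b)\cdot b+mod(a,b)$, and then $div(a,b)=div(div(a,b),c)\cdot c+mod(div(a,b),c)$. Substituting the second equation into the first and regrouping expresses $a$ as $div(div(a,b),c)$ times $bc$ plus a remainder built from $mod(a,b)$ and $mod(div(a,b),c)$, and Proposition~\ref{pDiv} identifies $div(div(a,b),c)$ with $div(a,bc)$, so this is exactly the decomposition $a=div(a,bc)\cdot bc+(\text{remainder})$ demanded by (a) — provided the remainder is checked to coincide with $r$. For (b) I would bound that remainder using $0\le mod(a,b)<b$ and $0\le mod(div(a,b),c)\le c-1$. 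A slicker alternative is to invoke Proposition~\ref{pNib} directly for the two-element base $B=\tuple{b,c}$: its digits $d_0,d_1,d_2$ of $a$ satisfy $a=d_0w_0+d_1w_1+d_2w_2$ with $w_2=bc$ and $d_2=div(a,bc)$, whence $mod(a,bc)=d_0w_0+d_1w_1$, and it remains only to match this expression against $r$.

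The delicate point — and the main obstacle — is precisely this matching of the remainder (equivalently, of $d_0w_0+d_1w_1$) with $r$: one must track carefully which weight multiplies the $mod(div(a,b),c)$ contribution when the $c$-division is folded back into the $b$-division, since that contribution lives at the second digit position of the base $\tuple{b,c}$ and so carries the weight $w_1$ of that position rather than $w_0$; and one must verify that the two contributions together stay strictly below $bc$, so that uniqueness of Euclidean division can legitimately be applied. Apart from this bookkeeping of coefficients and the range check, the argument is a routine substitution.
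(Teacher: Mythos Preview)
Your primary approach is sound and in substance coincides with the paper's: both expand $a = div(a,b)\cdot b + mod(a,b)$ and $div(a,b) = div(div(a,b),c)\cdot c + mod(div(a,b),c)$, substitute, and reduce the claim to Proposition~\ref{pDiv}. The paper runs this backward as a chain of biconditionals terminating at $div(a,bc)=div(div(a,b),c)$; you run it forward and finish by uniqueness of Euclidean division. The content is the same.

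Two remarks. First, your caution about ``which weight multiplies the $mod(div(a,b),c)$ contribution'' is more than bookkeeping: the statement as printed contains a typo. Carrying out the substitution yields the remainder $mod(a,b) + mod(div(a,b),c)\cdot b$, with weight $w_1=b$, not $c$; and with the printed factor~$c$ the range check $r<bc$ already fails (take $a=5$, $b=2$, $c=3$: the right-hand side is $1+2\cdot 3=7$, while $mod(5,6)=5$). The paper's own proof in fact establishes $r=r'+r''\cdot b$, and the only later use of the proposition (in the proof of Proposition~\ref{pNib}) also uses the $\cdot b$ form. So your argument proves the intended identity, just not the one literally written.

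Second, your ``slicker alternative'' via Proposition~\ref{pNib} is circular in this appendix: Proposition~\ref{pNib} is itself proved by invoking Proposition~\ref{pDivMod}, so it cannot be cited in the other direction.
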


\begin{proof}
  Let $r = mod(a,b \cdot c)$, $ r'=mod(a,b)$, $k'=div(b,c)$, and
  $r''=mod(k',c)$. By definition we can see that $a=k \cdot b \cdot c
  + r$, $a=k' \cdot b +r'$, and $k'=k'' \cdot c + r''$. Therefore
  $mod(a,b \cdot c)=mod(a,b)+mod(div(a,b),c) $ $\leftrightarrow$
  $r=r'+ r'' \cdot b $ $\leftrightarrow$ $a - k \cdot b \cdot c=a-k'
  \cdot b+ (k'-k'' \cdot c) \cdot b $ $\leftrightarrow$ $ - k\cdot
  c=-k' + (k'-k'' \cdot c) $ $\leftrightarrow$ $ k\cdot c=k'' \cdot c
  $ $\leftrightarrow$ $ k=k'' $ $\leftrightarrow$ $div(a,b \cdot c) =
  div(k',c)$ $\leftrightarrow$ $div(a,b \cdot c) = div(div(a,b),c)$
  and this is true by Proposition \ref{pDiv}.
\end{proof}\bigskip

\subsection{The rest of the details}
\noindent
For the sake of readability, we write $weights(B)= \tilde  B$.

\begin{proof}(of Proposition \ref{pNib} by induction on $|B|$). 
  \paragraph{Base (length is 1):} $B=\tuple{b}$ and hence $\tilde B =
  \tuple{1,b}$, $v_{(B)}(0)=mod(v,b), v_{(B)}(1)= div(v,b)$ and by
  definition of div and mod we can see
  that  $v=div(v,b)*b+mod(v,b)$.
  \paragraph{Step:} Assume the assumption is true for every base $B$
  such that $|B|=k-1$.  Let $|B|=k$ such that $B=\tuple{b_0,b_1,\ldots
    ,b_{k-1}}$. Define the base
  $B'=\tuple{b_{0}b_{1},b_2,\ldots,b_{k-1}}$ with $|B'|=k-1$.
  We can see
  that for $0<i < k-1$, $B'(i)=B(i+1)$, and that for $0<i < k$,
  $\tilde B'(i)=\tilde B(i+1)$.  From this we can see that for $0< i
  <k-1$, $v_{(B')}(i)=mod(div(v,\tilde B'(i)),B'(i))=mod(div(v,\tilde
  B(i+1)),B(i+1))=v_{(B)}(i+1)$ and also that
  $v_{(B')}(k-1)=div(v,\tilde B'(k-1))=div(v,\tilde B(k))=v_{(B)}(k)$.
  Back to the main claim by the induction we know that
  $v=\sum_{i=0}^{k-1}v_{(B')}(i) \tilde B'(i)$.
  Therefore $\sum_{i=0}^{k}v_{(B)}(i) \tilde B(i) = v_{(B)}(0) \tilde B(0) +  v_{(B)}(1) \tilde B(1) + \sum_{i=2}^{k}v_{(B)}(i) 	\tilde B(i)= v_{(B)}(0) \tilde B(0) +  v_{(B)}(1) \tilde B(1) + \sum_{i=1}^{k-1}v_{(B')}(i) \tilde B'(i) =  v_{(B)}(0) \tilde B(0) +  v_{(B)}(1) \tilde B(1) + v -  v_{(B')}(0) \tilde B'(0)  $. \\
  By Proposition \ref{pDivMod} we can see that $0= mod(v,b_0) +
  mod(div(v,b_0),b_1)\cdot b_0 - mod(v, b_0 \cdot b_1) = v_{(B)}(0)
  \tilde B(0) + v_{(B)}(1) \tilde B(1) - v_{(B')}(0) \tilde B'(0)$.
  And therefore we get that $\sum_{i=0}^{k}v_{(B)}(i) \tilde B(i) = v
  $.
\end{proof} \bigskip

\begin{proof}(of Proposition \ref{pSubBase1})\\
  First we notice that because $B'\succ B$ then for $0\leq i<|B|$ we
  get that $B(i)=B'(i)$ and therefore $\tilde B(i) = \tilde B'(i)$. By
  Proposition~\ref{pNib} we see that $v_{(B)}(i)=v_{(B')}(i)$.  By
  definition of $S_{(B)}$ and by Proposition~\ref{pNib} and
  the definition of $inputs_S^B$ we prove this proposition by induction on \ $0<
  i \leq|B|$.
	\begin{enumerate}
		\item 
			Base case: $inputs_S^B(1)=\sum_{v \in S} v_{(B)}(0)=\sum_{v \in S} v_{(B')}(0)=inputs_S^{B'}(1)$
		\item 
			Step: Assume that $inputs_S^B(i-1)=inputs_S^{B'}(i-1)$.\\
			$inputs_S^B(i)=\sum_{v \in S}  v_{(B)}(i-1) + div(inputs_S^B(i-1) ,B(i-2)) =$\\
			$\sum_{v \in S}  v_{(B')}(i-1) + div(inputs_S^{B'}(i-1) ,B'(i-2))=inputs_S^{B'}(i)$
	\end{enumerate}
\end{proof} \bigskip

\begin{proof}(of Proposition \ref{pSubBase2})\\
	Let $v$ and $B$ be as defined above. 
	Let $0\leq i \leq |B|$ such that $v\geq \tilde B(i)$. 
	If there exists an $i\leq j \leq |B|$ such that $v=\tilde B(j)$ then 
	we get that \ $div(v,\tilde B(j))=1$ \ and in any case ($j<|B|$ or $j=|B|$) \ \  $v_{(B)}(j)=1$.
	Otherwise let $i \leq j \leq |B|$ be the maximal index such that $v > \tilde B(j)$.
	If $j=|B|$ then $div(v,\tilde B(|B|))>0$. Consider the case when $j<|B|$
	Then $div(v,\tilde B(j))=k$ such that \  $k \cdot \tilde B(j) \leq k \cdot \tilde B(j)+ r=v< \tilde B(j)\cdot B(j)$  
	and so by dividing both sides by $\tilde B(j)$ we get that $div(v,\tilde B(j)) < B(j)$,
	which means that  $v_{(B)}(j)=mod(div(v,\tilde B(j)),B(j))>0$ .
\end{proof} \bigskip

 \begin{proof}(of Proposition \ref{pPrimesOnly})
	 \\The following proof is deeply based on the definition of $v_{(B)}$ and Proposition \ref{pNib}.\\
	Let $v$, $B_1$ and $B_2$ be as defined above.
	\begin{enumerate}
		\item 
			For $0\leq i < p$  by definition  $B_1(i)=B_2(i)$
			which means that $\tilde B_1(i)= \tilde B_2(i)$ and therefore $v_{(B_1)}(i)=v_{(B_2)}(i)$. 
			For $p< i<k-1$  again we get that $B_{2}(i)=B_{1}(i+1)$ 
			and  $\tilde B_{2}(i)=\tilde B_{1}(i+1)$ which again means that $v_{(B_2)}(i)=v_{(B_1)}(i+1)$). 
		\item 
  			We can see that $\tilde B_1(p)=\tilde B_2(p)$.  By Proposition \ref{pNib} we know that \\
			$0=v-v=\sum_{i=0}^{k}v_{B_1}(i)\cdot \tilde B_1(i) - \sum_{i=0}^{k-1}v_{B_2}(i) \cdot \tilde B_2(i)=$\\
			$v_{(B_1)}(p)\cdot \tilde B_1(p)+v_{(B_1)}(p+1)\cdot \tilde B_1(p+1)-v_{(B_2)}(p)\cdot \tilde B_2(p)=$\\
			$v_{(B_1)}(p)\cdot \tilde B_1(p)+v_{(B_1)}(p+1)\cdot \tilde B_1(p)\cdot a-v_{(B_2)}(p)\cdot \tilde B_1(p)$\\
			Therefore \  $v_{(B_2)}(p)=v_{(B_1)}(p)+v_{(B_1)}(p+1)\cdot a$ \\
			 Because $a>1$ we deduce that \ $v_{(B_2)}(p) \geq v_{(B_1)}(p)+v_{(B_1)}(p+1)$ \\
			And in total we get that  \\
			$ \sum_{i=0}^{k-1}v_{(B_2)}(i) - \sum_{i=0}^{k}v_{(B_1)}(i) =  \
			v_{(B_2)}(p) - v_{(B_1)}(p)- v_{(B_1)}(p+1) \geq 0$	
	\end{enumerate}
\end{proof} \bigskip 

\begin{proof}(of Proposition \ref{pBaseExtenstion})\\
Let $v\in \mathbb{N}$ and $B$,$B'$ bases such that $B'\succ B$.
Let $0 \leq i \leq |B|-1$ ($|B|>0$, otherwise there is no such index).
For $i=0$ by Proposition~\ref{pNib} we get that
 $v_{(B')}(0)=mod(v,B'(0))=mod(v,B(0))=v_{(B)}(0)$.
If $i>0$ then by Proposition~\ref{pNib} and definition of $\tilde B$ we can deduce $v_{(B')}(i)=mod(div(v,\tilde B'(i)),B'(i)) =
mod(div(v,\tilde B(i)),B(i))=v_{(B)}(i) $.
\end{proof}

\end{document}